\algnewcommand{\algorithmicswitch}{\textbf{switch}}
\algnewcommand{\algorithmiccase}{\textbf{case}}
\algnewcommand{\algorithmicon}{\textbf{on}}
\algrenewcommand{\algorithmicdo}{}
\algrenewcommand{\algorithmicthen}{}
\algnewcommand{\algorithmicgoto}{\textbf{goto}}%
\algnewcommand{\Goto}[1]{\algorithmicgoto~\ref{#1}}%
\algnewcommand{\algorithmicbreak}{\textbf{break}}%
\algnewcommand{\Break}[0]{\algorithmicbreak}%
\algnewcommand{\algorithmicwaiton}{\textbf{wait on}}%
\algnewcommand{\WaitOn}[1]{\algorithmicwaiton~{#1}}%
\pgfplotsset{compat=1.14}
\newcommand{\LOGda}[2]{%
    \ifthenelse{\equal{#1}{}}{%
        \ensuremath{\mathsf{LOG}_{\mathrm{da}}^{#2}}%
    }{%
        \ensuremath{\mathsf{LOG}_{\mathrm{da},#1}^{#2}}%
    }%
}
\newcommand{\LOGbft}[2]{%
    \ifthenelse{\equal{#1}{}}{%
        \ensuremath{\mathsf{LOG}_{\mathrm{bft}}^{#2}}%
    }{%
        \ensuremath{\mathsf{LOG}_{\mathrm{bft},#1}^{#2}}%
    }%
}
\newcommand{\eg}[0]{\emph{e.g.}\xspace}
\theoremstyle{plain}
\newtheorem*{theorem*}{Theorem}
\definecolor{myParula01Blue}{RGB}{0,114,189}
\definecolor{myParula02Orange}{RGB}{217,83,25}
\definecolor{myParula03Yellow}{RGB}{237,177,32}
\definecolor{myParula04Purple}{RGB}{126,47,142}
\definecolor{myParula05Green}{RGB}{119,172,48}
\definecolor{myParula06LightBlue}{RGB}{77,190,238}
\definecolor{myParula07Red}{RGB}{162,20,47}
\tikzset{myparula11/.style={color=myParula01Blue,solid,mark=+,mark options={solid}}}
\tikzset{myparula12/.style={color=myParula01Blue,densely dashed,mark=x,mark options={solid}}}
\tikzset{myparula13/.style={color=myParula01Blue,densely dotted,mark=o,mark options={solid}}}
\tikzset{myparula14/.style={color=myParula01Blue,dashdotted,mark=triangle,mark options={solid}}}
\tikzset{myparula15/.style={color=myParula01Blue,dashdotdotted,mark=square,mark options={solid}}}
\tikzset{myparula21/.style={color=myParula02Orange,solid,mark=+,mark options={solid}}}
\tikzset{myparula22/.style={color=myParula02Orange,densely dashed,mark=x,mark options={solid}}}
\tikzset{myparula23/.style={color=myParula02Orange,densely dotted,mark=o,mark options={solid}}}
\tikzset{myparula24/.style={color=myParula02Orange,dashdotted,mark=triangle,mark options={solid}}}
\tikzset{myparula25/.style={color=myParula02Orange,dashdotdotted,mark=square,mark options={solid}}}
\tikzset{myparula31/.style={color=myParula03Yellow,solid,mark=+,mark options={solid}}}
\tikzset{myparula32/.style={color=myParula03Yellow,densely dashed,mark=x,mark options={solid}}}
\tikzset{myparula33/.style={color=myParula03Yellow,densely dotted,mark=o,mark options={solid}}}
\tikzset{myparula34/.style={color=myParula03Yellow,dashdotted,mark=triangle,mark options={solid}}}
\tikzset{myparula35/.style={color=myParula03Yellow,dashdotdotted,mark=square,mark options={solid}}}
\tikzset{myparula41/.style={color=myParula04Purple,solid,mark=+,mark options={solid}}}
\tikzset{myparula42/.style={color=myParula04Purple,densely dashed,mark=x,mark options={solid}}}
\tikzset{myparula43/.style={color=myParula04Purple,densely dotted,mark=o,mark options={solid}}}
\tikzset{myparula44/.style={color=myParula04Purple,dashdotted,mark=triangle,mark options={solid}}}
\tikzset{myparula45/.style={color=myParula04Purple,dashdotdotted,mark=square,mark options={solid}}}
\tikzset{myparula51/.style={color=myParula05Green,solid,mark=+,mark options={solid}}}
\tikzset{myparula52/.style={color=myParula05Green,densely dashed,mark=x,mark options={solid}}}
\tikzset{myparula53/.style={color=myParula05Green,densely dotted,mark=o,mark options={solid}}}
\tikzset{myparula54/.style={color=myParula05Green,dashdotted,mark=triangle,mark options={solid}}}
\tikzset{myparula55/.style={color=myParula05Green,dashdotdotted,mark=square,mark options={solid}}}
\tikzset{myparula61/.style={color=myParula06LightBlue,solid,mark=+,mark options={solid}}}
\tikzset{myparula62/.style={color=myParula06LightBlue,densely dashed,mark=x,mark options={solid}}}
\tikzset{myparula63/.style={color=myParula06LightBlue,densely dotted,mark=o,mark options={solid}}}
\tikzset{myparula64/.style={color=myParula06LightBlue,dashdotted,mark=triangle,mark options={solid}}}
\tikzset{myparula65/.style={color=myParula06LightBlue,dashdotdotted,mark=square,mark options={solid}}}
\tikzset{myparula71/.style={color=myParula07Red,solid,mark=+,mark options={solid}}}
\tikzset{myparula72/.style={color=myParula07Red,densely dashed,mark=x,mark options={solid}}}
\tikzset{myparula73/.style={color=myParula07Red,densely dotted,mark=o,mark options={solid}}}
\tikzset{myparula74/.style={color=myParula07Red,dashdotted,mark=triangle,mark options={solid}}}
\tikzset{myparula75/.style={color=myParula07Red,dashdotdotted,mark=square,mark options={solid}}}
\pgfplotsset{
    mysimpleplot/.style = {
        every axis plot/.prefix style={thick},
        width=1.0\linewidth,
        height=0.75\linewidth,
        title style={font=\footnotesize,align=center},
        legend cell align=left,
        legend style={font=\footnotesize},
        legend columns=3,
        legend style={
            at={(0.5,1)},
            yshift=0.3em,
            anchor=south,
            draw=none,
            /tikz/every even column/.append style={
                column sep=0.3em
            },
            cells={
                align=left
            }
        },
        grid=both,
        minor tick num=3,
        major grid style={solid,draw=gray!50},
        minor grid style={densely dotted,draw=gray!50},
        label style={font=\footnotesize,align=center},
        tick label style={font=\footnotesize},
    },
}
\tikzstyle{block} = [rectangle, rounded corners, minimum width=3cm, minimum height=1cm,text centered, draw=black, fill=white]
\tikzstyle{arrow} = [thick,<-,>=stealth]
\setlist[itemize]{label=\textbullet}
\begin{document}
\title{MEV Capture and Decentralization in Execution Tickets}
%
%
\author{
Jonah Burian\inst{1} \and
Davide Crapis\inst{2} \and
Fahad Saleh \inst{3}
}

\institute{
Blockchain Capital\\
\email{jonah@blockchaincapital.com} \and
Robust Incentives Group - Ethereum Foundation\\
\email{davide.crapis@ethereum.org} \and 
University of Florida\\
\email{fahad.saleh@ufl.edu} 
}
\maketitle 

\begin{abstract}
We provide an economic model of Execution Tickets and use it to study the ability of the Ethereum protocol to \textit{capture MEV} from block construction. We demonstrate that Execution Tickets extract all MEV when all buyers are homogeneous, risk neutral and face no capital costs. We also show that MEV capture decreases with risk aversion and capital costs. Moreover, when buyers are heterogeneous, MEV capture can be especially low and a single dominant buyer can extract much of the MEV. This adverse effect can be partially mitigated by the presence of a Proposer Builder Separation (PBS) mechanism, which gives ET buyers access to a market of specialized builders, but in practice centralization vectors still persist. With PBS, ETs are concentrated among those with the highest ex-ante MEV extraction ability and lowest cost of capital. We show how it is possible that large investors that are not builders but have substantial advantage in capital cost can come to dominate the ET market.
\end{abstract}

\section{Introduction}
\label{sec:introduction}
Within the current Ethereum blockchain, a block proposer possesses a short-term monopoly right to propose the Execution Payload.\footnote{The block proposer refers to the validator chosen in the PoS random leader election to ``propose'' the next block. Moreover, the Execution Payload is the object in the block that contains the ordered list of transactions. For additional details regarding the Ethereum protocol, the interested reader may consult \cite{john2024economics}.} Notably, this monopoly right confers discretion over block contents and that discretion can be leveraged to acquire additional value, known as \textit{Maximal Extractable Value (MEV)}. Importantly, MEV is generally extracted from users and thereby serves as a disincentive for users to interact with the Ethereum blockchain. As a consequence, there is an active discussion regarding methods that enable the Ethereum protocol to capture MEV so that the proceeds can be distributed in such a way as to mitigate user losses.

Protocol MEV capture has been an open problem for some time. Recently Attester Proposer Separation (APS) was proposed, which is the idea of separating Execution Payload proposal rights from other duties of validators and selling the rights to a market of buyers. This mechanism has the potential, if implemented correctly, to preserve decentralization of Ethereum's validator set while enabling MEV capture. One of the most prominent implementation proposals is \textit{Execution Tickets (ETs)}. The primary contribution of this paper is to provide the first equilibrium analysis of ETs. This analysis allows us to shed light both on the extent to which ETs can successfully capture MEV and also implications regarding centralization.

MEV was first documented by \cite{daian2020flash} and corresponds specifically to value that can be extracted from users through discretion in determining which transactions are included in a block and their ordering.\footnote{The interested reader may consult \cite{angeris2023specter} for a formalization of MEV.} A classic example of MEV is a sandwich attack whereby a user order at a Decentralized Exchange (DEX) is sandwiched between a front-run and a back-run, yielding a profit to the trader conducting the sandwich but a loss for the user who faces higher trading costs due to the front-run (see \cite{harvey2024evolution} for details). In the context of a sandwich attack, a successful MEV internalization technique would extract the gain from the trader conducting the sandwich attack so that it could be redistributed to mitigate the effects of such attacks on users.

Under ET mechanism, there is a lottery in each slot to determine who will have the right to propose an Execution Block or Payload. This lottery is separate from the Beacon proposer lottery which identifies the consensus proposer who has the right to propose a Consensus Block, based on stake weight. The lottery tickets in the Execution Proposer lottery are ETs. Notably, an ET is a valid ticket to all future execution lotteries until it wins a lottery at which time it is removed from circulation. In turn, an ET always confers the right to propose the Execution Payload for exactly one slot in the future but the slot number is a random variable with a geometric distribution.

\begin{tikzpicture}[node distance=5cm, align=center]

\node[style=block] (validator) {Validator};
\node[style=block] (etbuyer) [right of=validator] {ET Buyer};
\node[style=block] (blockbuilder) [right of=etbuyer] {Block Builder};

\draw [arrow] (validator) -- node[above] {APS} (etbuyer);
\draw [arrow] (etbuyer) -- node[above] {PBS} (blockbuilder);

\node[below of=validator, yshift=3.6cm, align=center] {\textit{Attestations} \& \\ \textit{Consensus Block Proposal}};
\node[below of=etbuyer, yshift=3.6cm, align=center] {\textit{Execution Payload Proposal}};
\node[below of=blockbuilder, yshift=3.6cm, align=center] {\textit{Ordered List} \\ \textit{of Transactions}};

\end{tikzpicture}

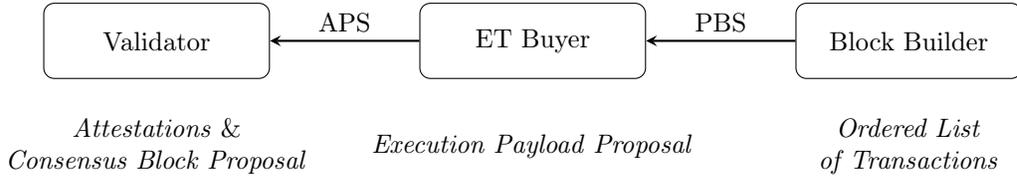
\captionof{figure}{This diagram illustrates the separation of duties between Validator, ET Buyer, and Block Builder roles in the presence of APS and PBS mechanisms (duties are below each entity in cursive).}
\hspace{0.7cm}

Our paper provides three important \textit{insights regarding protocol MEV capture via ETs}:
\begin{itemize}
    \item[(\textit{i})] Risk Aversion and Capital Costs undermine the success of ETs internalizing all MEV;
    \item[(\textit{ii})] Heterogeneity among builders with or without PBS undermines the success of ETs internalizing all MEV;
    \item[(\textit{iii})] In all situations, the more competitive the playing field, the more MEV can be internalized.
\end{itemize}

The first point arises because an ET confers the right to determine the block contents for a future slot at a random time. This structure introduces both risk in terms of the pay-off and costs in terms of needing to secure capital ahead of time. As a consequence, when ET buyers are risk-averse and face capital costs, they are not willing to pay the expected value of MEV for an ET. Rather, each buyer would pay the risk-adjusted discounted value of MEV which is necessarily lower than the expected value of MEV. We formalize this point through two results, Propositions \ref{prop:main1} and \ref{prop:main2}. Proposition \ref{prop:main1} demonstrates that the protocol does fully capture MEV when builders are risk-neutral and face no capital costs, whereas Proposition \ref{prop:main2} shows that the protocol does not fully capture MEV when builders are risk-averse and face non-zero capital costs.

The second point arises because, when buyers are heterogeneous, they possess different valuations for ETs. In turn, the buyer with the highest valuation has no incentive to pay more than the buyer with the second highest valuation and no other buyer has an incentive to pay above the valuation for the buyer with the second-highest valuation either. In turn, the gap between the two highest valuations is captured by the buyer with highest valuation and not the protocol. We demonstrate this result formally in Proposition \ref{prop:main3} and Proposition \ref{prop:main4}. 

In the presence of a (PBS) mechanism, \textit{any} ET holder can sell their ET in the MEV Boost market at a competitive valuation that does not vary by builder. That said, every builder also has the option to opt out of PBS (in some instances this can be done ex-post). Hence, buyers remain heterogeneous with PBS, and as before, MEV is similarly not fully internalized.

The third point arises given the intuition that the gap between the two highest valuations is captured by the buyer with highest valuation and not the protocol. The tighter this gap, the more MEV is captured.  \\

Our paper also provides implications regarding \textit{centralization vectors in the ET market}:
\begin{itemize}
    \item[(\textit{i})] When buyers are homogeneous, ET holdings are decentralized;
    \item[(\textit{ii})] When buyers are heterogeneous and there is no PBS mechanism, ET holdings are centralized;
    \item[(\textit{iii})] When buyers are heterogeneous and PBS is present, ET ownership is centralized among buyers who balance low capital costs with high MEV extraction abilities Moreover, large investors with lower capital costs may dominate the market. Even with ticket ownership concentration, the block-building rights are likely sold ex-post via PBS, and thus the tickets are likely exercised by builders with the best ex-post MEV extraction ability in a slot.
\end{itemize}

The first point arises because, when buyers are homogeneous, they possess identical valuations for ETs. In turn, all buyers are willing to purchase ETs. This result is stated formally as part of Propositions \ref{prop:main1} and \ref{prop:main2}. Crucially, as per the second point, \textit{any} heterogeneity among buyers whereby the ex ante valuation for ETs is higher for some set of buyers than others results in a centralization of ET holdings whereby only the buyers with the highest ex ante valuation hold ETs. This point is formalized by Proposition \ref{prop:main3} and Proposition \ref{prop:main4}. 

PBS gives buyers equal access to builders at the time of block construction, meaning ticket holders don't need to have native MEV extraction capabilities. Instead, they can solicit the block's value from the competitive PBS market. However, builders always have the option not to run PBS, which may be rational if their building ability surpasses that of the market. This dynamic maintains buyer heterogeneity while also providing those without native MEV extraction ability a baseline extraction capability via PBS. Thus, better capital costs and superior native MEV extraction ability can lead buyers to have a higher ex-ante valuation of ETs than others. In Proposition \ref{prop:main8}, we formalize the criteria for balancing capital costs with MEV extraction ability (which can be sourced through PBS), leading certain buyers to dominate the market. In Proposition \ref{prop:main9}, we further formalize that large investors with systematically better capital costs might come to dominate the ET buying market. 

Given PBS, the owner of the ET might not be the one who exercises the building rights and may instead sell the right via PBS. Hence, ticket centralization does not necessarily imply builder centralization.\\


We proceed hereafter by providing comprehensive background in Section \ref{sec:background}. We then state our formal economic model in Section \ref{sec:model}. We provide the general equilibrium solution in Section \ref{sec:solution} along with several important special cases. Our results are provided in Section \ref{sec:results}. We discuss practical implications of our findings and forward-looking thoughts in Section \ref{sec:discussion}.

\section{Background}
\label{sec:background}

\subsection*{Current MEV Market Structure}

Without external aids, proposers struggle to capture the bulk of accessible MEV. This is because effective capture necessitates sophistication given the need to quickly and efficiently navigate through combinatorially complex search spaces. Acknowledging that validators might not be ideally suited for these intricate challenges, PBS \cite{barnabe_pbs} has been widely embraced \cite{pbs-paper}, with the most common instantiation being MEV-Boost.\footnote{See the MEV-Boost GitHub repository here \cite{mev_boost}.} This approach enables all proposers, irrespective of their level of sophistication, to capture the majority of the MEV value in their block.

PBS distinguishes the block-building function from the proposing function. Proposers can opt into an auction where builders bid for the right to control the ordering of transactions in a block. Proposers then include in the Execution Payload the list of transaction with the highest associated bid. Since the builders select the payload, they receive the MEV minus the bid, while the validator receives the bid. Effectively, proposers are auctioning off their one slot monopoly to builders. Given the competitiveness of the market and the short-term monopoly the validator has on block production, the winning bid ends up being slightly less than the value of the MEV.

\subsection*{Problems with the Current Market Structure}

\subsubsection{Centralization Problem.}

The current MEV supply chain puts centralization pressure\footnote{
Maintaining a robust Ethereum network requires validator decentralization, namely geographical, client, and owner diversity of the validator set. Systematic advantages for certain validators and externalities that encourage colocation corrode the foundational integrity of the network. Builder centralization, on the other hand, is problematic if it leads to systematic censorship of transactions. A decentralized validator set may be able to maintain censorship resistance through mechanisms like inclusion lists \cite{eip-7547}, meaning that builder centralization may not be an issue as long as there is a robust decentralized validator set. 
} on Ethereum as more sophisticated validators consistently make more money than less sophisticated ones \cite{centralization-pb}. This is because there is an incentive for proposers to play timing games \cite{timinggames}, and certain validators are better equipped to play these games. Given differentials in average proposer rewards, the current market structure should, in theory, lead to the centralization of the validator set in the long run.\footnote{The intuition behind the existence of timing games in the MEV instance is that capturable MEV monotonically increases over time. Proposers are incentivized to wait until the last instant to propose so that they get the necessary attestations to get their block accepted. Sophisticated proposers should strategically collocate with attesters and builders so that they can accept a bid from a builder as late as possible and push the block through as quickly as possible. Not all validators have the infrastructure and/or relationships to play these games.} As it stands, builders are centralized\footnote{A handful of builders dominate most of the blocks, \eg, beaverbuild, Titan Build, rsync \cite{mev_pics}} and validators centralizing with them can undermine the overall decentralization of the network.

\subsubsection{Allocation Problem.}

In the current market structure, proposers earn nearly all the MEV, yet they are merely agents of the protocol. MEV is generated by users on applications that sit on the protocol. From an allocation perspective, MEV value should probably be allocated to users, applications, or the protocol. The proposer is an agent of the protocol; it does not make sense that all the value is directed to them.

Applications are becoming increasingly aware that they are leaving value on the table. New and existing applications are designing/redesigning their protocols to leak less MEV, with some planning to redistribute this value to users. That said, it is unlikely that MEV will cease to exist. Applications might not be able to internalize all their MEV, there will most likely continue to be naive applications that still produce MEV, and there will be cross-domain (cross-application) MEV. Hence, Ethereum will likely always have residual MEV. Under the current market structure, this value will flow to validators. Arguably, this value should be captured by the protocol and redistributed according to the protocol's macroeconomic goals, \eg, given pro rata to stakers or Ethereum holders.

\subsection*{Potential Solutions}
Two approaches have recently emerged as potential avenues for solving the centralization and allocation problems. These are mechanisms where the right to propose an Execution Payload is not given for free to the Beacon proposer, but it is instead sold separately to an execution proposer. The two mechanisms are Execution Auctions and Execution Tickets \cite{futuremev}.\footnote{Burian and Crapis have written about these two solutions on Eth Research. See \cite{burian_crapis_tickets} and \cite{burian_crapis_auctions}.}

\subsubsection{Execution Auctions (EAs)}

The right to propose an Execution Payload is deterministically allocated in advance for each slot, the slot execution proposer can purchase this right by bidding in a slot auction held beforehand (\eg, 32). EAs are essentially slot auctions carried out in advance.

\subsubsection{Execution Tickets (ETs)}

With ETs, the execution proposing right is non-deterministically allocated, unlike with EAs. Proposers can purchase a lottery ticket (ET) from the protocol. In the steady state, there are \(n\) tickets. Before each slot, a winner is drawn at random from the ticket pool and receives the right to propose. The winning ticket is removed from the pool (burned), and a new one is sold, maintaining the invariant of there always being \(n\) tickets in the lottery pool. The simple version of the protocol gives the winner the right to propose the following block.\footnote{Colloquially, ETs with a single slot resolution are called sETs (simple ETs).} The original Execution Ticket post suggested a general version of the protocol where the winner has the right to propose \(m\) slots later (\eg, 32) \cite{drake2023execution}.

\section{Model}
\label{sec:model}
We model a finite set of buyers, $\mathcal{B} = \{1,...,B\}$, who optimally select the number of ETs to hold. In more detail, for each slot $t \in \mathbb{N}$ of the Ethereum protocol, each buyer $b \in \mathcal{B}$ optimally selects a number of Execution Tickets, $k_b \in \{0,...,N\}$, at the end of the previous slot $t-1$ where $N \in \mathbb{N}_+$ is the total number of tickets in circulation. More formally, buyer $b$ maximizes their risk-adjusted expected profit less capital costs as follows:

\begin{equation}
        \underset{k_b \in \{0,...,N\} }{\max}~ \mathbb{E}[\Pi_b(\text{P\&L}_{b,t})] - r_{b} \cdot P \cdot k_b
    \label{eqn:ETpurchase}
\end{equation}
where $\text{P\&L}_{b,t}$ denotes the net profit of buyer $b$ in slot $t$, $\Pi_b(x)$ is a function for buyer $b$'s risk-adjustment, $r_b \geq 0$ denotes buyer $b$'s cost of capital, and $P \geq 0$ denotes the endogenous stationary price of Execution Tickets. For simplicity, we assume that MEV extraction for each buyer is i.i.d. across time and thus the endogenous stationary price of ETs, $P$, is also the time-invariant price for each time $t$. We discuss relaxing this assumption in Section \ref{sec:discussion}. Additionally, as per prior literature, we assume that $\Pi(x)$ is a twice-continuously differentiable function that satisfies $\Pi^\prime(x) > 0$ and $\Pi^{\prime \prime}(x) \leq 0$ over $x \in \mathbb{R}_+$ and $\Pi(x) = 0$ whenever $x \leq 0$. $\text{P\&L}_{b,t}$ is given explicitly as follows:

\begin{equation}
    \text{P\&L}_{b,t} = \mathcal{I}_{b,t} \cdot (R_{b,t} - P)
    \label{eqn:payoff}
\end{equation}
where $\mathcal{I}_{b,t}$ denotes the probability that an Execution Ticket of buyer $b$ is selected and $R_{b,t}$ is a non-negative random variable denoting the pay-off accrued during slot $t$ by buyer $b$ (if buyer $b$ is selected for that slot) where we impose $\mathbb{E}[R_{b,t}] < \infty$.\footnote{In our initial results, we abstract from MEV-Boost and, in this case, $R_{b,t}$ represents the value extracted directly by buyer $b$ in slot $t$. For Proposition \ref{prop:main8}, we allow for an MEV Boost market in which case the buyer receives the sales proceeds of the ET and thus $R_{b,t}$ corresponds to the sales proceeds from MEV Boost.} In more detail, the portfolio value of buyer $B$ at the end of slot $t-1$, denoted $V_{t-1}$, after purchasing $k_b$ Execution Tickets is given as follows:

\begin{equation}
    V_{t - 1} = k_b \cdot P
    \label{eqn:vt-1}
\end{equation}
whereas, after the lottery in slot $t$, the portfolio value becomes:

\begin{equation}
    V_t = \mathcal{I}_{b,t} \cdot ((k_b - 1) \cdot P + R_{b,t}) + (1 - \mathcal{I}_{b,t}) \cdot (k_b \cdot P),\qquad k_b \geq 1
    \label{eqn:vt}
\end{equation}

Equations \eqref{eqn:vt-1} and \eqref{eqn:vt} directly imply Equation \eqref{eqn:payoff}. Intuitively, a buyer's Execution Ticket could be sold at market value $P$ but being selected entails the ticket being burned in return for a random pay-off determined by the MEV extracted by the buyer from the block, $R_{b,t}$. Thus, being selected entails a random P\&L increasing in the block MEV and decreasing in the price of the Execution Ticket (formalized by Equation \ref{eqn:payoff}).

\label{sec:solution}
Before providing the general model solution, it is useful to put forth some preliminary quantities. In particular, we define the maximal price at which buyer $b \in \mathcal{B}$ would hold an Execution Ticket, $\overline{P}_b$, is given as follows:\footnote{Equation \eqref{eqn:maxPb} is always well-defined because $\{ P \geq 0 : \frac{1}{N} \mathbb{E}[\Pi_b(R_{b,t}- P)] - r_b \cdot P \geq 0\}$ is non-empty, bounded from above and closed. It is non-empty because $P = 0$ is always within the set, and it is bounded from above because all set elements satisfy $P \leq \frac{\mathbb{E}[\Pi_b(R_{b,t})]}{r_b \cdot N}$. Additionally, to see that it is closed, consider a convergent sequence, $\{ P_n \}_{n = 1}^{\infty} \subseteq \{ P \geq 0 : \frac{1}{N} \mathbb{E}[\Pi_b(R_{b,t}- P)] - r_b \cdot P \geq 0\}$ with limit point $P_\star = \lim_{n \to \infty} P_n$. Then, $\frac{1}{N} \mathbb{E}[\Pi_b(R_{b,t}- P_n)] - r_b \cdot P_n \geq 0$ for all $n$ so that we can take $n \to \infty$ on both sides which yields $\frac{1}{N} \lim_{n \to \infty} \mathbb{E}[\Pi_b(R_{b,t}- P_n)] - r_b \cdot P_\star = \frac{1}{N} \mathbb{E}[\Pi_b(R_{b,t}- P_\star)] - r_b \cdot P_\star  \geq 0$, implying $P_\star \in \{ P \geq 0 : \frac{1}{N} \mathbb{E}[\Pi_b(R_{b,t}- P)] - r_b \cdot P \geq 0\}$ and thereby establishing that $\{ P \geq 0 : \frac{1}{N} \mathbb{E}[\Pi_b(R_{b,t}- P)] - r_b \cdot P \geq 0\}$ is closed. As a technical aside, the interchange of the limit and expectation in the last equality is always valid due to the Dominated Convergence Theorem. In more detail, $\Pi_b(R_{b,t})$ is a uniform and integrable bound for $\{ \Pi_b(R_{b,t} - P_n) \}_{n \in \mathbb{N}}$; more explicitly, $\sup_{n \in \mathbb{N}} |\Pi_b(R_{b,t} - P_n)| \leq \Pi_b(R_{b,t})$ because $\Pi(x)$ is increasing and non-negative, and $\mathbb{E}[|\Pi_b(R_{b,t})|] =\mathbb{E}[\Pi_b(R_{b,t})] \leq \Pi_b(\mathbb{E}[R_{b,t}]) < \infty$ where the first inequality is Jensen's inequality relying on concavity of $\Pi(x)$ and the last inequality is due to $R_{b,t}$ having a finite first moment.}
\begin{equation}
\overline{P}_b = \max\{ P \geq 0 : \frac{1}{N} \mathbb{E}[\Pi_b(R_{b,t}- P)] - r_b \cdot P \geq 0\}
\label{eqn:maxPb}
\end{equation}
In turn, we specify the maximum price at which any buyer would purchase an Execution Ticket as follows:

\begin{equation}
    \overline{P}_{(1)} = \underset{b: b \in \mathcal{B}}{\max}~\overline{P}_b
\end{equation}
and we specify the set of buyers that would hold an Execution Ticket at this maximal price as follows:
\begin{equation}
\overline{\mathcal{B}} = \{ b \in \mathcal{B} : \overline{P}_b = \overline{P}_{(1)} \}
\label{eqref:bestbuyerset}
\end{equation}
Finally, we define the second largest value from $\{ \overline{P}_b \}_{b \in \mathcal{B}}$ as follows:
\begin{equation}
    \overline{P}_{(2)} = \begin{cases}
        \overline{P}_{(1)} & \text{if } |\overline{\mathcal{B}}| > 1\\
        \underset{b : b \in \mathcal{B} \bigcap \overline{\mathcal{B}}^c}{\max}~\overline{P}_b & \text{otherwise}
    \end{cases}
\end{equation}

With the given preliminaries, we can now state our general equilibrium solution:

\begin{proposition} Equilibrium Solution\\
In general, there exist multiple equilibria. More formally, necessary and sufficient conditions for an equilibrium are that the stationary price, $P$, and ET holdings, $\{ k_b \}_{b \in \mathcal{B}}$ satisfy the following conditions:
\begin{itemize}
    \item[(i)] \underline{Execution Ticket Price}\\
    The ET price, $P$, satisfies $P \in [\overline{P}_{(2)}, \overline{P}_{(1)}]$
    \item[(ii)] \underline{Execution Ticket Holdings}\\
    - Any buyer with a maximal price below the largest maximal price holds no ETs: $b \notin \overline{\mathcal{B}} \implies k_b = 0$\\
    - The set of buyers with maximal price equal to the largest 
    maximal price holds all ETs: $\sum\limits_{b : b \in \overline{\mathcal{B}}} k_b = N$
\end{itemize}
\label{prop:eqsoln}
\end{proposition}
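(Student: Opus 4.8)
The plan is to reduce the two-sided equilibrium condition to a one-dimensional best-response analysis per buyer and then close the system with market clearing. First I would fix the notion of equilibrium: a stationary price $P$ together with holdings $\{k_b\}_{b\in\mathcal{B}}$ such that each $k_b$ maximizes \eqref{eqn:ETpurchase} taking $P$ as given, and the market clears, $\sum_{b\in\mathcal{B}} k_b = N$. The crucial preliminary observation is that buyer $b$'s objective is \emph{linear} in $k_b$. Since exactly one ticket is drawn per slot, the event that a ticket of $b$ wins is Bernoulli with probability $k_b/N$; on that event the realized P\&L is $R_{b,t}-P$ and off it the P\&L is $0$. Using $\Pi_b(0)=0$, the expected risk-adjusted profit collapses to $\tfrac{k_b}{N}\,\mathbb{E}[\Pi_b(R_{b,t}-P)]$, so \eqref{eqn:ETpurchase} equals $k_b\, g_b(P)$ with $g_b(P):=\tfrac{1}{N}\mathbb{E}[\Pi_b(R_{b,t}-P)]-r_b P$.

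Next I would establish a \emph{demand lemma}. Because $\Pi_b$ is non-decreasing and $r_b>0$ (exactly the condition that bounds the set in \eqref{eqn:maxPb} from above, making $\overline{P}_b$ well-defined), the map $P\mapsto g_b(P)$ is continuous and strictly decreasing, with $g_b(0)\ge 0$ and $g_b(\overline{P}_b)=0$. Hence $g_b(P)>0\iff P<\overline{P}_b$ and $g_b(P)<0\iff P>\overline{P}_b$. Since the objective is $k_b\,g_b(P)$ and linear in $k_b\in\{0,\dots,N\}$, the unique best response is $k_b=N$ when $P<\overline{P}_b$, it is $k_b=0$ when $P>\overline{P}_b$, and every $k_b$ is optimal when $P=\overline{P}_b$.

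Both directions then follow by bookkeeping that splits on whether $|\overline{\mathcal{B}}|=1$ or $|\overline{\mathcal{B}}|>1$. For necessity I would rule out $P>\overline{P}_{(1)}$ (all buyers demand $0$, so $\sum_b k_b=0<N$) and $P<\overline{P}_{(2)}$ (at least two distinct buyers have $\overline{P}_b\ge\overline{P}_{(2)}>P$ and hence each strictly demands $N$, giving $\sum_b k_b\ge 2N>N$), yielding (i). Given (i), any $b\notin\overline{\mathcal{B}}$ satisfies $\overline{P}_b<\overline{P}_{(1)}$; if $P=\overline{P}_{(1)}$ then $g_b(P)<0$ forces $k_b=0$, whereas the only way to have $\overline{P}_b=P$ is $P=\overline{P}_{(2)}<\overline{P}_{(1)}$, which forces $|\overline{\mathcal{B}}|=1$, so the unique top buyer strictly demands $N$ and clearing again forces $k_b=0$; the second bullet of (ii) is then immediate from clearing. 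For sufficiency I would read off clearing from (ii) and verify each buyer's optimality via the demand lemma, the only delicate case being $b\in\overline{\mathcal{B}}$ with $P<\overline{P}_{(1)}$, where (i) forces $|\overline{\mathcal{B}}|=1$ and (ii) pins $k_b=N$, precisely the unique optimum.

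I expect the main obstacle to be the careful handling of the indifference boundary $P=\overline{P}_b$, and specifically the configuration $|\overline{\mathcal{B}}|=1$ with $P=\overline{P}_{(2)}$: there a second-best buyer is individually indifferent, so optimality alone cannot pin its holdings, and one must combine the \emph{strict} demand of the unique top buyer with market clearing to force it to hold nothing. The other step needing care, though routine, is justifying the linear reduction and the strict monotonicity of $g_b$ (the $\Pi_b(0)=0$ factoring, the interchange of expectation and limit, and the role of $r_b>0$).
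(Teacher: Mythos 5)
Your proposal is correct and follows essentially the same route as the paper: the linear-in-$k_b$ reduction with the bang-bang demand threshold at $\overline{P}_b$ is Lemma~\ref{lemma:optkb}, the exclusion of prices outside $[\overline{P}_{(2)},\overline{P}_{(1)}]$ is Lemma~\ref{lemma:necconditions}, the case analysis at the indifference boundary (including the $|\overline{\mathcal{B}}|=1$, $P=\overline{P}_{(2)}$ configuration you flag) is Lemma~\ref{lemma:necconditions2}, and sufficiency by verifying best responses is Lemma~\ref{lemma:suffconditions}. If anything, your sufficiency step is slightly more complete than the paper's, which only exhibits the particular equilibrium at $P=\overline{P}_{(2)}$ rather than verifying every $(P,\{k_b\})$ satisfying (i)--(ii).
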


An immediate corollary is that, if all buyers are homogeneous (i.e., $\forall b, b^\prime \in \mathcal{B}: R_{b,t} = R_{b^\prime, t}~a.s., r_b = r_{b^\prime}, \Pi_b = \Pi_{b^\prime}$), then there exists a unique equilibrium price for ETs, the maximal price that each buyer is willing to pay for an ET:

\begin{corollary} Homogeneous Equilibrium Solution\\
If all buyers are homogeneous $($i.e., $\forall b, b^\prime \in \mathcal{B}: R_{b,t} = R_{b^\prime, t}~a.s., r_b = r_{b^\prime}, \Pi_b = \Pi_{b^\prime})$, then there is a unique stationary price, $P$, and any feasible set of holdings that clears the market is an equilibrium. More explicitly, the equilibrium is given as follows:
\begin{itemize}
    \item[(i)] \underline{Execution Ticket Price}\\
    The ET price, $P$, is given as follows: $P = \overline{P}$ where $\overline{P} = \overline{P}_{(b)}$ for all $b \in \mathcal{B}$ due to homogeneity.
    \item[(ii)] \underline{Execution Ticket Holdings}\\
    ET holdings must satisfy only the market-clearing condition: $\sum\limits_{b : b \in \mathcal{B}} k_b = N$
\end{itemize}
\end{corollary}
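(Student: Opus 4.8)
The plan is to obtain the corollary as a direct specialization of Proposition~\ref{prop:eqsoln} to the homogeneous case, so the only real work is to verify that homogeneity collapses the admissible price interval $[\overline{P}_{(2)}, \overline{P}_{(1)}]$ to a single point and renders the first holdings condition vacuous. Heavy lifting (the existence and characterization of equilibria) is already carried by the proposition, so I would not redo any of it.

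First I would show that the common maximal price is well-defined, i.e. that $\overline{P}_b = \overline{P}_{b'}$ for all $b, b' \in \mathcal{B}$. This follows from the defining Equation~\eqref{eqn:maxPb}: the set whose maximum defines $\overline{P}_b$ depends on the index $b$ only through the distribution of $R_{b,t}$, the cost of capital $r_b$, and the risk-adjustment function $\Pi_b$. Under homogeneity we have $R_{b,t} = R_{b',t}$ almost surely, hence these variables are identical in law, so that $\mathbb{E}[\Pi_b(R_{b,t} - P)] = \mathbb{E}[\Pi_{b'}(R_{b',t} - P)]$ for every $P \geq 0$; combined with $r_b = r_{b'}$ and $\Pi_b = \Pi_{b'}$, the two defining sets coincide, and taking maxima gives $\overline{P}_b = \overline{P}_{b'}$. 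I denote the common value by $\overline{P}$.

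Next I would compute the derived quantities. Since every $\overline{P}_b$ equals $\overline{P}$, we get $\overline{P}_{(1)} = \max_b \overline{P}_b = \overline{P}$ and $\overline{\mathcal{B}} = \{ b \in \mathcal{B} : \overline{P}_b = \overline{P}_{(1)} \} = \mathcal{B}$. Assuming at least two buyers (the single-buyer case being trivial), $|\overline{\mathcal{B}}| = B > 1$, so the definition of $\overline{P}_{(2)}$ selects its first branch and yields $\overline{P}_{(2)} = \overline{P}_{(1)} = \overline{P}$. Finally I would invoke Proposition~\ref{prop:eqsoln}: condition~(i) places $P \in [\overline{P}_{(2)}, \overline{P}_{(1)}] = [\overline{P}, \overline{P}]$, forcing $P = \overline{P}$ and establishing price uniqueness; in condition~(ii) the implication $b \notin \overline{\mathcal{B}} \implies k_b = 0$ is vacuous because $\overline{\mathcal{B}} = \mathcal{B}$, while the aggregate constraint $\sum_{b \in \overline{\mathcal{B}}} k_b = N$ reduces to the market-clearing condition $\sum_{b \in \mathcal{B}} k_b = N$. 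Hence any nonnegative holdings summing to $N$ form an equilibrium, exactly as claimed.

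The only delicate point is the reduction from almost-sure equality of the payoffs to equality of the risk-adjusted expectations, which I would justify by noting that $\mathbb{E}[\Pi_b(R_{b,t} - P)]$ is a functional of the law of $R_{b,t}$ alone, so a.s.-equal (hence equal-in-distribution) payoffs produce identical valuation sets. This is bookkeeping rather than a genuine obstacle; there is no substantive difficulty once Proposition~\ref{prop:eqsoln} is in hand.
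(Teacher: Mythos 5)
Your proposal is correct and matches the paper's approach: the paper states this as an immediate corollary of Proposition~\ref{prop:eqsoln}, and your argument simply fleshes out the intended specialization (homogeneity forces $\overline{P}_{(1)} = \overline{P}_{(2)}$, collapsing the price interval, and makes the holdings restriction reduce to market clearing). Your explicit handling of the a.s.-equality of payoffs and the $|\overline{\mathcal{B}}| > 1$ branch of the definition of $\overline{P}_{(2)}$ is more careful than the paper, which offers no written proof at all.
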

A further corollary is that if buyers are homogeneous and risk-neutral (i.e., $\forall b: \Pi_b(x) = x$ and $\forall b, b^\prime \in \mathcal{B}: R_{b,t} \overset{d}{=} R_{b^\prime, t}, r_b = r_{b^\prime}, \Pi_b(x) = x$), then the unique ET price is available is simply the cost-of-capital-discounted value of the expected value of MEV from a single block:

\begin{corollary} Homogeneous Risk-Neutral Equilibrium Solution\\
If all buyers are homogeneous and risk-neutral $($i.e., $\forall b, b^\prime \in \mathcal{B}: R_{b,t} = R_{b^\prime, t} ~a.s., r_b = r_{b^\prime}, \Pi_b(x) = x)$, then there is a unique stationary price, $P$, and any feasible set of holdings that clears the market is an equilibrium. More explicitly, the equilibrium is given as follows:
\begin{itemize}
    \item[(i)] \underline{Execution Ticket Price}\\
    The ET price, $P$, is given as follows: $P = \overline{P}$ where $\overline{P} = \frac{\mathbb{E}[R_{b,t}]}{1 + r_b \cdot N}$ for all $b \in \mathcal{B}$ due to homogeneity.
    \item[(ii)] \underline{Execution Ticket Holdings}\\
    ET holdings must satisfy only the market-clearing condition: $\sum\limits_{b : b \in \mathcal{B}} k_b = N$
\end{itemize}
\end{corollary}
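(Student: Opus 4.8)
The plan is to derive this statement as a direct specialization of the preceding \emph{Homogeneous Equilibrium Solution} corollary, so that essentially all of the work reduces to evaluating the common maximal price $\overline{P}_b$ in closed form. Since the hypotheses here strengthen those of that corollary (we additionally impose $\Pi_b(x) = x$), I would invoke it directly: it already guarantees that every homogeneous buyer shares a single maximal price $\overline{P} = \overline{P}_b$, that this $\overline{P}$ is the unique stationary price, and that any ticket allocation satisfying the market-clearing condition $\sum_{b \in \mathcal{B}} k_b = N$ constitutes an equilibrium. Consequently, part (ii) is inherited verbatim, and only the explicit formula in part (i) remains to be established.

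To obtain that formula, I would substitute $\Pi_b(x) = x$ into the defining optimization in Equation \eqref{eqn:maxPb}. Because the risk adjustment is now the identity, linearity of expectation collapses $\mathbb{E}[\Pi_b(R_{b,t} - P)]$ to $\mathbb{E}[R_{b,t}] - P$, which is finite by the standing assumption $\mathbb{E}[R_{b,t}] < \infty$. The constraint $\tfrac{1}{N}\mathbb{E}[\Pi_b(R_{b,t}-P)] - r_b P \ge 0$ then becomes the affine inequality $\tfrac{1}{N}\big(\mathbb{E}[R_{b,t}] - P\big) - r_b P \ge 0$ in the single variable $P$. Rearranging isolates $P\big(\tfrac{1}{N} + r_b\big) \le \tfrac{1}{N}\mathbb{E}[R_{b,t}]$, i.e. $P(1 + r_b N) \le \mathbb{E}[R_{b,t}]$.

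The feasible set of the optimization is therefore the interval $\big[0,\ \tfrac{\mathbb{E}[R_{b,t}]}{1 + r_b N}\big]$, and since the objective maximizes $P$ over this set, the maximum is attained at the right endpoint, giving $\overline{P} = \tfrac{\mathbb{E}[R_{b,t}]}{1 + r_b N}$, exactly the claimed price. That the maximum is attained rather than merely a supremum follows from the closedness of the constraint set already verified in the footnote to Equation \eqref{eqn:maxPb}; here it is transparent, since the set is a closed interval.

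The one point requiring care---and the only genuine obstacle---is the treatment of the ``risk neutral'' specification relative to the global convention $\Pi(x) = 0$ for $x \le 0$ imposed in the general model. I would make explicit that risk neutrality is taken to mean the identity risk adjustment $\Pi_b(x) = x$ applied throughout, so that the event $\{R_{b,t} < P\}$ contributes its signed value rather than being floored at zero; this is precisely what permits the clean pass-through $\mathbb{E}[R_{b,t} - P] = \mathbb{E}[R_{b,t}] - P$ and hence the exact cost-of-capital-discounted expression. Were one instead to read risk neutrality as $\Pi_b(x) = \max\{x,0\}$, the expectation would not separate and the formula would acquire a truncation correction; flagging this interpretation is the substantive step, after which the algebra above is routine.
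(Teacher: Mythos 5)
Your proof is correct and follows essentially the same route the paper intends: the corollary is stated without a separate proof, but the paper's proof of Proposition \ref{prop:main1} performs exactly your substitution of $\Pi_b(x)=x$ into Equation \eqref{eqn:maxPb} (there with $r_b=0$), and your rearrangement to $P(1+r_bN)\le \mathbb{E}[R_{b,t}]$ is the evident generalization. Your closing remark about reconciling $\Pi_b(x)=x$ with the global convention $\Pi(x)=0$ for $x\le 0$ is a fair catch of a genuine tension in the model setup, and the paper's own computations confirm it adopts the identity interpretation you use.
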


\section{Results}
\label{sec:results}
In this section we present our main results. We study two setups, first we focus on a pure ET mechanism and then we look at ET performance in the presence of PBS. In the first case, without PBS we find that: 

\begin{itemize}
    \item If ET buyers are homogeneous, risk-neutral and face no capital costs, then \textit{the protocol extracts all MEV and ET holdings are decentralized across buyers}.
    \item If investors are homogeneous but not necessarily risk-neutral and face capital costs, then \textit{the protocol does not extract all MEV and ET holdings are decentralized across buyers}.
    \item If buyers are heterogeneous, then \textit{MEV capture can be low, moreover, one buyer may extract most of the MEV }and the ET holding would be concentrated on that buyer. 
\end{itemize}

In the second case, with PBS, we find that: 

\begin{itemize}
    \item PBS turns the market into a special case of a no-PBS world with heterogeneous MEV extraction abilities and differing costs of capital. The protocol does not extract all MEV, leading to ET holdings being centralized among buyers who balance the best MEV extraction abilities with the lowest cost of capital.
    \item The ET holders may not necessarily exercise the building rights for the blocks and may instead outsource block construction via PBS.
\end{itemize}

To provide these results formally, we first clarify the meaning of MEV capture. In particular, the long-run proportion of MEV extracted by the protocol, $\chi$, is given as follows:

\begin{equation}
    \chi = \frac{P}{\mathbb{E}[R_{\overline{B}}]}
    \label{eqn:mevinternalizationratio}
\end{equation}
where $R_{\overline{B}}$ denotes the MEV extracted by the winning buyer in an arbitrary slot. Note that we can define $R_{\overline{B}}$ without explicitly specifying a slot because the distribution of MEV extraction by a winning buyer across time is i.i.d. This latter point follows from $\{ k_b \}_{b \in \mathcal{B}}$ being time-invariant and MEV extraction across buyers being i.i.d. We discuss relaxing these assumptions in Section \ref{sec:discussion}.

We now turn to stating each of our main results, in each of the subsequent sections. Our Section \ref{sec:nopbs-results} results are intended as conceptual benchmarks; we assume homogeneity/heterogeneity in MEV extraction ability in a vacuum, without specifying how such ability is formed, focusing instead on its impact on MEV capture and decentralization. Our Section \ref{sec:pbs-results} results are the most realistic, and we discuss further possible refinements in Section \ref{sec:discussion}.

\subsection{MEV Capture and Decentralization in  the ET Market}
\label{sec:nopbs-results}

\subsubsection*{Total MEV Capture w/ Decentralization.}

Our first main result establishes a benchmark setting whereby MEV capture is total and decentralization also arises:

\begin{proposition} Total MEV Capture with Decentralization\\
Assume that buyers are homogeneous, risk-neutral and face no capital costs $($i.e., $\forall b: r_b = 0, \Pi_b(x) = x$ and $\forall b, b^\prime: R_{b,t} = R_{b^\prime, t})$. Then, MEV capture by the protocol is total:

$$
\chi = 1
$$
Additionally, in this case, there exists an equilibrium with full decentralization of ET holdings:
$$
\forall b \in \mathcal{B}: k_b = \frac{N}{B}
$$
\label{prop:main1}
\end{proposition}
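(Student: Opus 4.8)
The plan is to reduce everything to the general equilibrium characterization of Proposition~\ref{prop:eqsoln} by first computing the per-buyer reservation price $\overline{P}_b$ under the three stated restrictions. Setting $r_b = 0$ and $\Pi_b(x) = x$ in the defining set of Equation~\eqref{eqn:maxPb}, the participation inequality $\frac{1}{N}\mathbb{E}[\Pi_b(R_{b,t} - P)] - r_b P \ge 0$ collapses to $\mathbb{E}[R_{b,t}] - P \ge 0$, so that $\overline{P}_b = \mathbb{E}[R_{b,t}]$ for every $b$. This is exactly the $r_b = 0$ instance of the Homogeneous Risk-Neutral corollary. I would explicitly flag the convention that in the risk-neutral case $\Pi_b$ is read as the identity on all of $\mathbb{R}$ rather than being floored at $0$ on the negatives; otherwise $\mathbb{E}[\Pi_b(R_{b,t}-P)] = \mathbb{E}[(R_{b,t}-P)^+] \ge 0$ would hold for every $P$, the constraint would be vacuous, and $\overline{P}_b$ would not reduce to $\mathbb{E}[R_{b,t}]$.

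Next I would bring in homogeneity. Since $R_{b,t} = R_{b',t}$ almost surely for all $b, b'$, writing $R$ for the common MEV variable gives $\overline{P}_b = \mathbb{E}[R]$ for every buyer. Consequently $\overline{\mathcal{B}} = \mathcal{B}$ and $\overline{P}_{(1)} = \mathbb{E}[R]$; because $|\overline{\mathcal{B}}| = B$, the first branch in the definition of $\overline{P}_{(2)}$ yields $\overline{P}_{(2)} = \overline{P}_{(1)} = \mathbb{E}[R]$. Proposition~\ref{prop:eqsoln}(i) then forces $P \in [\overline{P}_{(2)}, \overline{P}_{(1)}] = \{\mathbb{E}[R]\}$, pinning down the equilibrium price uniquely as $P = \mathbb{E}[R]$.

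I would then read off $\chi$ from Equation~\eqref{eqn:mevinternalizationratio}. By homogeneity the winning buyer always extracts MEV distributed as $R$, so $\mathbb{E}[R_{\overline{B}}] = \mathbb{E}[R]$ regardless of which feasible holdings are realized; substituting $P = \mathbb{E}[R]$ gives $\chi = \mathbb{E}[R]/\mathbb{E}[R] = 1$, establishing total capture. For decentralization I would verify that the symmetric profile $k_b = N/B$ satisfies the holdings conditions of Proposition~\ref{prop:eqsoln}(ii): every buyer lies in $\overline{\mathcal{B}}$, so the implication $b \notin \overline{\mathcal{B}} \implies k_b = 0$ is void, and $\sum_{b} k_b = B \cdot (N/B) = N$ clears the market. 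Hence the symmetric allocation is an equilibrium (implicitly assuming $B \mid N$ so that $N/B$ is an admissible integer holding).

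The computations are routine once the setup is correct; the only point requiring genuine care is the interpretation of the risk-neutral $\Pi_b$ noted in the first paragraph, and I expect that to be the sole real obstacle. After $\overline{P}_b = \mathbb{E}[R]$ is correctly derived, the collapse $\overline{P}_{(2)} = \overline{P}_{(1)}$ and the appeal to Proposition~\ref{prop:eqsoln} make both the price, the capture ratio, and the decentralized allocation immediate.
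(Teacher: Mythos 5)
Your proof is correct and follows essentially the same route as the paper: compute $\overline{P}_b = \mathbb{E}[R_{b,t}]$ from Equation~\eqref{eqn:maxPb} under risk neutrality and zero capital cost, invoke Proposition~\ref{prop:eqsoln} (via homogeneity collapsing $\overline{P}_{(2)} = \overline{P}_{(1)}$) to pin down $P = \mathbb{E}[R]$ and hence $\chi = 1$, and verify the symmetric allocation $k_b = N/B$ clears the market. Your side remark that $\Pi_b(x)=x$ must be read as the identity on all of $\mathbb{R}$ rather than floored at $0$ (lest the constraint in Equation~\eqref{eqn:maxPb} become vacuous) is a valid and worthwhile clarification that the paper leaves implicit.
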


Proposition \ref{prop:main1} highlights that, when buyers possess identical MEV extraction skills and face no financing costs nor risk aversion, then the protocol extracts all MEV. Intuitively, when buyers possess identical MEV extraction skills, then they compete for ETs which results in ET pricing fully reflecting the risk-adjusted discounted value of the MEV from a future block. Moreover, when buyers are risk neutral, then there is no risk-adjustment and when buyers face no capital cost, then there is no discounting. Thus, when buyers possess identical MEV extraction skills while facing no risk aversion nor any capital costs, then all buyers are independently willing to pay the expected value of MEV from a future block and the competition to purchase ETs among the buyers ensures that the ET price exactly equals the expected value of MEV from a block. In turn, the protocol extracts all MEV.

\subsubsection*{Partial MEV Capture w/ Decentralization.}

Our second main result relaxes the assumptions of risk-neutrality and zero capital costs from Proposition \ref{prop:main1} but maintains homogeneity of buyers. In that context, we demonstrate that MEV capture is not total. Nonetheless, as in the previous case, decentralization arises in equilibrium:

\begin{proposition} Partial MEV Capture with Decentralization\\
Assume that buyers are homogeneous $(\forall b, b^\prime: R_{b,t} = R_{b^\prime, t}, r_b = r_{b^\prime}, \Pi_b = \Pi_{b^\prime})$ but risk-averse (i.e., $\forall b: \Pi_b^{\prime \prime} < 0$) and face non-zero capital costs $($i.e., $\forall b: r_b > 0)$ . Then, MEV capture by the protocol is partial:

$$
\chi < 1
$$
Nonetheless, in this case, there exists an equilibrium with full decentralization of ET holdings:
$$
\forall b \in \mathcal{B}: k_b = \frac{N}{B}
$$
\label{prop:main2}
\end{proposition}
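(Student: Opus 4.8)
The plan is to reduce the statement to a single strict inequality for the common maximal price and then extract strictness from the interplay of risk aversion (strict concavity of $\Pi_b$) and the positive capital cost. Since all buyers are homogeneous, the Homogeneous Equilibrium Solution corollary gives a unique stationary price $P=\overline P$, where $\overline P=\overline P_b$ is common across $b$, and it tells us that \emph{any} market-clearing allocation is an equilibrium; in particular $k_b=N/B$ satisfies $\sum_{b}k_b=N$ (interpreting holdings as divisible, or when $B\mid N$), which already discharges the decentralization half. Because the $R_{b,t}$ are identically distributed, the winner's payoff obeys $\mathbb E[R_{\overline B}]=\mathbb E[R_{b,t}]=:\mu$, so by \eqref{eqn:mevinternalizationratio} we have $\chi=\overline P/\mu$. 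Everything therefore reduces to proving $\overline P<\mu$ (assuming $\mu>0$, else $\chi$ is vacuous).

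First I would record the structure of $g(P):=\tfrac1N\mathbb E[\Pi_b(R_{b,t}-P)]-r_bP$ from \eqref{eqn:maxPb}: monotonicity of $\Pi_b$ makes the expectation term non-increasing in $P$, and $r_b>0$ makes $-r_bP$ strictly decreasing, so $g$ is strictly decreasing; with $g(0)=\tfrac1N\mathbb E[\Pi_b(R_{b,t})]>0$, the price $\overline P$ is the unique root, characterized by
\[
r_b\,\overline P \;=\; \tfrac1N\,\mathbb E\!\left[\Pi_b\!\left(R_{b,t}-\overline P\right)\right].
\]
The strategy is to bound the right-hand side above by a linear term in $\mu-\overline P$. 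On the region where $R_{b,t}-\overline P$ lies in the domain of strict concavity, Jensen's inequality gives $\mathbb E[\Pi_b(R_{b,t}-\overline P)]\le\Pi_b(\mu-\overline P)$, strictly so whenever $R_{b,t}$ is non-degenerate (by $\Pi_b''<0$), and concavity with $\Pi_b(0)=0$ then gives $\Pi_b(\mu-\overline P)\le\Pi_b'(0)\,(\mu-\overline P)$ for $\mu\ge\overline P$. Substituting yields $\overline P\,(r_bN+\Pi_b'(0))\le\Pi_b'(0)\,\mu$, i.e. $\overline P\le\frac{\Pi_b'(0)}{\Pi_b'(0)+r_bN}\,\mu<\mu$, the last step using $r_b>0$; equivalently, one may assume $\overline P\ge\mu$ and reach a contradiction, since then $\Pi_b(\mu-\overline P)=0$ would force $r_b\overline P\le 0$. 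Risk aversion enters through the strict Jensen gap while the capital cost $r_b>0$ makes the final inequality strict, so the two hypotheses cover both the degenerate-$R$ case (driven by $r_b$) and the non-degenerate case (driven by concavity); both Proposition~\ref{prop:main1} and the risk-neutral value $\overline P=\mu/(1+r_bN)$ reappear as boundary cases.

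The main obstacle is that $\Pi_b$ is concave only on $\mathbb R_+$ and is flat (equal to $0$) on $\mathbb R_-$, hence \emph{not} concave across the kink at the origin, so the Jensen step is not automatic once $\overline P$ exceeds $\operatorname{essinf}R_{b,t}$: on the loss event $\{R_{b,t}<\overline P\}$ the integrand equals $0$ rather than the negative tangent value, which is precisely the source of extra ``option value'' that could in principle push $\overline P$ up toward $\mu$. To make the argument airtight I would either (i) restrict to the economically natural regime $\overline P\le\operatorname{essinf}R_{b,t}$, where $R_{b,t}-\overline P\ge 0$ a.s. and all concavity bounds apply verbatim, or (ii) handle the truncation explicitly by splitting the expectation on $\{R_{b,t}\ge\overline P\}$ and bounding the downside contribution directly. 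Verifying that the equilibrium price lands in the strictly-concave region—so that the kink does not overturn the bound—is the step I expect to require the most care.
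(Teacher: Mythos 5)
Your overall route is the same as the paper's: characterize the common maximal price $\overline P$ as the zero of $g(P)=\tfrac1N\mathbb E[\Pi_b(R_{b,t}-P)]-r_bP$, apply Jensen to compare $\mathbb E[\Pi_b(R_{b,t}-\overline P)]$ with $\Pi_b(\mathbb E[R_{b,t}]-\overline P)$, and conclude $\overline P<\mathbb E[R_{b,t}]$; the decentralization half then follows from symmetry exactly as in the paper (via Lemma~\ref{lemma:optkb} and market clearing). Your tangent-line refinement $\overline P\le \Pi_b'(0)\,\mu/(\Pi_b'(0)+r_bN)$ is a genuine addition---it cleanly separates the roles of $r_b>0$ and $\Pi_b''<0$ and recovers the risk-neutral price as a boundary case---but it is not needed for the stated claim.

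The ``main obstacle'' you flag is real, and you should know the paper's own proof does not resolve it either: that proof writes $\Pi_b(\mathbb E[R_{\overline{B}}]-P)<0$ for $P>\mathbb E[R_{\overline{B}}]$, which treats $\Pi_b$ as strictly increasing (hence negative) on the negative half-line, contradicting the model's stated normalization $\Pi(x)=0$ for $x\le 0$. Under that literal normalization the kinked function is not concave across the origin, Jensen fails in the needed direction, and the proposition itself can fail: take $N=1$, $r_b=0.01$, $R_{b,t}\in\{0,2\}$ with equal probability, and $\Pi_b(x)=\log(1+x)$ on $\mathbb R_+$; then at $P=\mu=1$ one has $\tfrac1N\mathbb E[\Pi_b(R_{b,t}-P)]=\tfrac12\log 2\approx 0.35>r_bP$, so $\overline P>\mu$ and $\chi>1$. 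Consequently your option (ii) (``handle the truncation explicitly'') cannot succeed without additional hypotheses, and your option (i) imports an assumption ($\overline P\le\operatorname{ess\,inf}R_{b,t}$) that is not in the proposition. The only clean repair---evidently what the authors intend, given how their proof reads---is to take $\Pi_b$ strictly increasing and strictly concave on all of $\mathbb R$ with $\Pi_b(0)=0$, so losses carry negative utility; under that reading both the paper's argument and yours go through, and yours is the more carefully executed of the two. As written, however, your proof stops short of committing to a resolution at precisely the step that determines whether the claim is true, so it is incomplete.
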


Proposition \ref{prop:main2} arises because risk-aversion and non-zero capital costs imply that buyers do not value ETs as equivalent to the MEV value of a single block. In particular, since an ET promises MEV from a \textit{future} block, non-zero capital costs, $r > 0$, imply that the future MEV value is discounted at that non-zero rate $r > 0$. Moreover, since the MEV per block is uncertain, the ET value is further reduced by a risk-adjustment. Thus, although all buyers possess the same value for an ET under this setting, this value is lower than the expected value of MEV per block. Since the buyers are homogeneous, the market is competitive and the equilibrium ET price equals to the common risk-adjusted discounted MEV block value, implying that the protocol does not capture all MEV.

\subsubsection*{Low MEV Capture w/ Centralization.}

Our next main results highlight that when buyers are differentiated in their capital costs and/or MEV extraction abilities, then centralization arises. Moreover, the best buyer can acquire a large share of MEV at the expense of the protocol.

We look at two cases where buyers are risk-neutral, and highlight how heterogeneity with respect to either capital cost or MEV extraction ability results in lower MEV capture and higher centralization.


\begin{proposition}{Low MEV Capture w/ Heterogeneous Capital Costs}\\
Assume that buyers have homogeneous MEV extraction abilities $($i.e. $\forall b, b^\prime: R_{b,t} = R_{b^\prime, t})$, are risk-neutral and face heterogeneous capital costs.

In this case, there exists an equilibrium where MEV capture by the protocol is given as follows:

$$
\chi = \frac{1}{1 + r_{(2)} \cdot N} < 1
$$

where $r_{(2)}$ is the second-lowest cost-of-capital among all buyers. Moreover, the equilibrium ET distribution is determined fully by cost-of-capital:

$$
k_b = \begin{cases}
    \frac{N}{|\overline{\overline{\mathcal{B}}}|}
& \text{ if } r_b = r_{(1)}\\
0 & \text{if } r_b > r_{(1)}\end{cases} 
$$
where $r_{(1)}$ is the lowest cost of capital among all buyers.
\label{prop:main3}
\end{proposition}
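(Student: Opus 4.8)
The plan is to reduce everything to the general equilibrium characterization in Proposition~\ref{prop:eqsoln}, by first computing each buyer's reservation price $\overline{P}_b$ explicitly under the stated hypotheses and then selecting the equilibrium that realizes the claimed MEV-capture ratio.

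First I would compute $\overline{P}_b$ directly from its definition in Equation~\eqref{eqn:maxPb}. Under risk-neutrality $\Pi_b(x) = x$, so the defining inequality $\frac{1}{N}\mathbb{E}[R_{b,t} - P] - r_b \cdot P \geq 0$ is linear in $P$ and rearranges to $P \leq \frac{\mathbb{E}[R_{b,t}]}{1 + N r_b}$; hence $\overline{P}_b = \frac{\mathbb{E}[R_{b,t}]}{1 + N r_b}$ (the same per-buyer computation underlying the Homogeneous Risk-Neutral corollary, which here I apply buyer-by-buyer since only $\Pi_b$ and $r_b$ enter). Using homogeneity of MEV extraction, $\mathbb{E}[R_{b,t}] = \mathbb{E}[R]$ is a common constant, so $\overline{P}_b = \frac{\mathbb{E}[R]}{1 + N r_b}$, which is strictly decreasing in $r_b$. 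This monotonicity is the crux of the argument: it shows the ordering of reservation prices is exactly the reverse of the ordering of capital costs, so that $\overline{P}_{(1)} = \frac{\mathbb{E}[R]}{1 + N r_{(1)}}$ is attained precisely by the set of lowest-cost buyers $\overline{\mathcal{B}} = \{ b : r_b = r_{(1)} \}$ (the set counted by $|\overline{\overline{\mathcal{B}}}|$ in the statement), and the second-highest reservation value is $\overline{P}_{(2)} = \frac{\mathbb{E}[R]}{1 + N r_{(2)}}$.

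Next I would invoke Proposition~\ref{prop:eqsoln} and select the equilibrium at the lower endpoint of the admissible price interval, $P = \overline{P}_{(2)} \in [\overline{P}_{(2)}, \overline{P}_{(1)}]$ (the lower endpoint is what produces the worst-case, i.e.\ lowest, capture). The holdings conditions of Proposition~\ref{prop:eqsoln} force every buyer with $r_b > r_{(1)}$, hence outside $\overline{\mathcal{B}}$, to hold $k_b = 0$, while the lowest-cost buyers must jointly absorb all $N$ tickets; splitting these equally, $k_b = N/|\overline{\mathcal{B}}|$, is a feasible market-clearing allocation and thus a valid equilibrium matching the stated distribution.

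Finally I would compute the capture ratio from Equation~\eqref{eqn:mevinternalizationratio}. The key observation is that, because MEV extraction is homogeneous across buyers, the winning buyer in any slot extracts a payoff with the common distribution of $R$, so $\mathbb{E}[R_{\overline{B}}] = \mathbb{E}[R]$ irrespective of which ticket holder wins; substituting $P = \overline{P}_{(2)} = \frac{\mathbb{E}[R]}{1 + N r_{(2)}}$ then gives $\chi = \frac{P}{\mathbb{E}[R_{\overline{B}}]} = \frac{1}{1 + N r_{(2)}}$, with $\chi < 1$ following from $r_{(2)} > 0$. I expect the only genuine subtleties to be bookkeeping ones: confirming strict monotonicity of $\overline{P}_b$ so that $\overline{\mathcal{B}}$ is exactly the lowest-cost set and $\overline{P}_{(2)}$ corresponds to $r_{(2)}$ (including the degenerate tie case where several buyers share $r_{(1)}$, in which $\overline{P}_{(2)} = \overline{P}_{(1)}$ and $r_{(2)} = r_{(1)}$, leaving the formula unchanged), and justifying $\mathbb{E}[R_{\overline{B}}] = \mathbb{E}[R]$ from homogeneity; neither requires machinery beyond Proposition~\ref{prop:eqsoln}.
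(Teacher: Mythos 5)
Your proposal is correct and follows essentially the same route as the paper's own argument (which, note, appears in the appendix under the heading ``Proof of Proposition 5'' due to an apparent swap of the two proof subsections): exhibit the candidate equilibrium $P=\overline{P}_{(2)}=\frac{\mathbb{E}[R]}{1+r_{(2)}N}$ with tickets split among the lowest-cost buyers, verify it against the equilibrium characterization, and use homogeneity of $R_{b,t}$ to get $\mathbb{E}[R_{\overline{B}}]=\mathbb{E}[R]$. You simply make explicit the reservation-price computation $\overline{P}_b=\frac{\mathbb{E}[R]}{1+Nr_b}$ and its monotonicity in $r_b$, which the paper compresses into ``direct verification.''
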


Proposition \ref{prop:main3} highlights the fundamentally important role of capital costs. In more detail, the value of an ET to a buyer depends entirely on capital costs with buyers possessing lower capital costs possessing higher valuations. Then, as per the last part of Proposition \ref{prop:main3}, the ET market centralizes among the buyers with the lowest capital costs.

\begin{proposition}{Low MEV Capture w/ Heterogeneous MEV Extraction Ability}\\
Without loss of generality, assume that buyers are indexed by expected MEV extraction ability (i.e., $\forall b < b^\prime: \mathbb{E}[R_{b,t}] \geq \mathbb{E}[R_{b^\prime,t}]$. Assume further that the best MEV-extracting buyer is strictly better than the second-best MEV-extracting buyer (i.e., $\mathbb{E}[R_{1,t}] > \mathbb{E}[R_{2,t}]$). Moreover, for simplicity, assume that buyers are risk-neutral (i.e., $\forall b: \Pi_b(x) = x$) and face no capital costs (i.e., $\forall b: r_b = 0$). In this case, there exists an equilibrium where MEV capture by the protocol is given as follows:

$$
\chi = \frac{\mathbb{E}[R_{2,t}]}{\mathbb{E}[R_{1,t}]} < 1
$$
Additionally, the equilibrium ET distribution is fully centralized on the best buyer:

$$
k_1 = N,\qquad \forall b \neq 1: k_b = 0
$$
\label{prop:main4}
\end{proposition}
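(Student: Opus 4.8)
The plan is to derive this as a direct specialization of the general equilibrium characterization in Proposition~\ref{prop:eqsoln}; the work is entirely in evaluating the reservation prices $\overline{P}_b$ under the stated assumptions. First I would substitute $r_b = 0$ and $\Pi_b(x) = x$ into the defining expression~\eqref{eqn:maxPb}. The capital-cost term vanishes and the constraint collapses to $\tfrac{1}{N}\bigl(\mathbb{E}[R_{b,t}] - P\bigr) \ge 0$, \ie $P \le \mathbb{E}[R_{b,t}]$, so that $\overline{P}_b = \mathbb{E}[R_{b,t}]$ for every buyer $b$ (the no-capital-cost, risk-neutral evaluation of~\eqref{eqn:maxPb}, now differing across buyers only through $\mathbb{E}[R_{b,t}]$).

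Next I would read off the auxiliary quantities from the ordering hypothesis. Since buyers are indexed so that $\mathbb{E}[R_{1,t}] \ge \mathbb{E}[R_{2,t}] \ge \cdots$, we get $\overline{P}_{(1)} = \max_b \overline{P}_b = \mathbb{E}[R_{1,t}]$. The strict gap $\mathbb{E}[R_{1,t}] > \mathbb{E}[R_{2,t}]$ is precisely what makes $\overline{\mathcal{B}} = \{\,b : \overline{P}_b = \overline{P}_{(1)}\,\}$ a singleton, namely $\overline{\mathcal{B}} = \{1\}$, whence $\overline{P}_{(2)} = \max_{b \neq 1} \overline{P}_b = \mathbb{E}[R_{2,t}]$. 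Feeding these into Proposition~\ref{prop:eqsoln}(ii) immediately yields the centralization claim: every $b \neq 1$ lies outside $\overline{\mathcal{B}}$ and so holds $k_b = 0$, while the market-clearing condition forces $k_1 = N$.

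Finally, for the MEV-capture statement I would invoke Proposition~\ref{prop:eqsoln}(i), which permits any price in $[\overline{P}_{(2)}, \overline{P}_{(1)}] = [\mathbb{E}[R_{2,t}], \mathbb{E}[R_{1,t}]]$, and single out the equilibrium with $P = \overline{P}_{(2)} = \mathbb{E}[R_{2,t}]$ (the economically natural outcome, since the top buyer need only outbid the second). Because buyer $1$ holds all $N$ tickets, the winning buyer in every slot is buyer $1$, so $R_{\overline{B}} \overset{d}{=} R_{1,t}$ and hence $\mathbb{E}[R_{\overline{B}}] = \mathbb{E}[R_{1,t}]$. Substituting into the internalization ratio~\eqref{eqn:mevinternalizationratio} then gives $\chi = P/\mathbb{E}[R_{\overline{B}}] = \mathbb{E}[R_{2,t}]/\mathbb{E}[R_{1,t}]$, strictly below $1$ by the strict-gap hypothesis.

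The computation is routine; the only points needing care are that one must use the literal linear convention $\Pi_b(x)=x$ rather than the truncated form $\max(x,0)$ when evaluating~\eqref{eqn:maxPb} (the latter would make the reservation price degenerate), and that the strict inequality $\mathbb{E}[R_{1,t}] > \mathbb{E}[R_{2,t}]$ is what both collapses $\overline{\mathcal{B}}$ to the single buyer $1$ and pins $R_{\overline{B}}$ to $R_{1,t}$. I expect no substantive obstacle beyond this bookkeeping.
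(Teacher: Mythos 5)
Your proof is correct and follows essentially the same route as the paper's own argument (which, owing to an apparent swap of headings in the appendix, is the one printed under ``Proof of Proposition~\ref{prop:main3}''): specialize Equation~\eqref{eqn:maxPb} to obtain $\overline{P}_b = \mathbb{E}[R_{b,t}]$, identify $\overline{P}_{(1)} = \mathbb{E}[R_{1,t}]$ and $\overline{P}_{(2)} = \mathbb{E}[R_{2,t}]$, verify the equilibrium $P = \overline{P}_{(2)}$ with $k_1 = N$, and substitute into~\eqref{eqn:mevinternalizationratio}. Your closing caveat about using the literal linear $\Pi_b(x)=x$ (rather than the truncated form, under which the set in~\eqref{eqn:maxPb} would be unbounded when $r_b=0$) matches what the paper does implicitly.
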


Proposition \ref{prop:main4} arises because, when buyers are heterogeneous in their MEV extraction abilities, then there exists an equilibrium where the ET price is equal to the expected MEV from the second-best buyer. In turn, the best buyer is able to buy the ET below their higher valuation and captures the difference. As a consequence, the protocol does not capture all the MEV. Crucially, to the extent that the best buyer is better than the second-best buyer, the best buyer is able to capture a commensurately larger share of MEV.

Another implication of Proposition \ref{prop:main4} is that centralization of ET holdings can arise when buyers are differentiated. More specifically, the best buyer is willing to pay more for an ET than all other buyers because the best buyer can generate higher MEV, on average, than any other buyer. In turn, the best buyer purchases all ETs. Of particular note, the relevant notion of being the best buyer is ex ante and not ex post. That is, even if a second buyer would be better at extracting MEV in a particular slot ex post, the sample path of events from the slot is not known when the ET is issued (or even when the ET is assigned to a slot). As a consequence, purchase decisions in the primary market are necessarily driven by ex ante valuations and thus so long as one buyer has a higher expected MEV value for a block, that buyer would dominate the market for ETs. As an aside, we note that our results do not preclude the best buyer selling the ET after the fact, but the key point is that such a sale (in the secondary market) would not raise revenue for the protocol but rather for the buyer who made the purchase in the primary market.

\begin{comment}
\begin{proposition}{Low MEV Capture w/ Centralization and Relaxed Assumptions}\\
Here we have the same set up as Proposition \ref{prop:main3} with the difference that buyers face heterogeneous capital costs. In this case, there exists an equilibrium where MEV capture by the protocol is given as follows:

$$
\chi = ? < 1
$$

Additionally, the equilibrium ET distribution is fully centralized on the set of buyers with:

$$
\min \{ \frac{E[R_{b,t}]}{1+r_bN} \}
$$
\label{prop:main5}
\end{proposition}

[TODO: ake some comment about the intuition of Proposition \ref{prop:main5}]
\end{comment}

\subsection{ETs in the Presence of PBS}
\label{sec:pbs-results}
In this section, we generalize our model to study outcomes in the presence of a PBS mechanism. Crucially, PBS allows any ET buyer to sell their ET in a market after MEV extraction abilities are realized. The ET buyer may sell in the PBS market or hold onto the ET, build the block and extract MEV as per their ability. More formally, in the presence of  PBS mechanism, the pay-off for an ET buyer $b \in \mathcal{B}$ is given as follows:
\begin{equation}
R_{b,t} = \max\{ \Gamma\big(\{X_{i,t} \}_{i \in \mathcal{B}}, \{Y_{j,t} \}_{j \in \beta}\big), X_{b,t}\}
\label{eqn:investormevpayoff}
\end{equation}
where $\{X_{i,t} \}_{i \in \mathcal{B}}$ corresponds to MEV extraction abilities of buyers, $\{Y_{j,t} \}_{j \in \beta}$ corresponds to MEV extraction abilities of non-buyers and $\Gamma: \mathbb{R}_+^{B + |\beta|} \mapsto \mathbb{R}_+$ maps all MEV extraction abilities to the price determined from the PBS mechanism. Note that the arbitrary specification of $\Gamma$ implies that we take no stand on how prices are formed through the PBS process; rather, we assume only that this price is available symmetrically across all ET buyers. Note also that, while the PBS price is symmetric across all ET buyers, the overall pay-off is not symmetric because each ET buyer has an outside option to build the block in which case the pay-off depends on the ET buyer's specific MEV extraction ability.

\subsubsection*{Benchmark Case: Naive PBS Implementation.}
In this case we establish two benchmark results assuming that builders do not buy ETs. These highlight the role of capital differentiation and are meant as a benchmark for the case where everyone buys ETs, analyzed in the subsequent subsection. \\

\begin{corollary}{ETs with PBS where Builders do not buy tickets}\\
For simplicity, we assume that buyers are risk-neutral (i.e., $\forall b: \Pi_b(x) = x$). Additionally, to examine the case that builders do not buy ETs, we assume that all buyers have no MEV extraction abilities $($i.e., $X_{b,t} = 0)$, implying that all buyers have the same pay-off, $R_{b,t} = \Gamma(\{0 \}_{i \in \mathcal{B}}, \{ Y_{j,t} \}_{j \in \beta})$. In this case, MEV capture by the protocol is given as follows:

$$
\chi = \frac{1}{1 + r_{(2)} \cdot N} < 1
$$
where $r_{(2)}$ is the second-lowest cost-of-capital among all buyers. Moreover, the equilibrium ET distribution is determined fully by cost-of-capital:

$$
k_b = \begin{cases}
    \frac{N}{|\overline{\overline{\mathcal{B}}}|}
& \text{ if } r_b = r_{(1)}\\
0 & \text{if } r_b > r_{(1)}\end{cases} 
$$
where $r_{(1)}$ is the lowest cost of capital among all buyers.
\label{prop:main6}
\end{corollary}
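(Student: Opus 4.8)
The plan is to recognize this Corollary as a direct specialization of Proposition \ref{prop:main3}. The entire substance of the argument is a reduction: I would show that the naive-PBS hypotheses force the payoff $R_{b,t}$ to be identical across all buyers, placing us squarely in the homogeneous-MEV-extraction, risk-neutral, heterogeneous-capital-cost regime already resolved by Proposition \ref{prop:main3}. No fresh equilibrium analysis is needed once that reduction is in hand.

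First I would unwind the PBS payoff from Equation \eqref{eqn:investormevpayoff}. Substituting $X_{b,t} = 0$ for every buyer $b \in \mathcal{B}$ yields $R_{b,t} = \max\{ \Gamma(\{0\}_{i \in \mathcal{B}}, \{Y_{j,t}\}_{j \in \beta}), 0 \}$. Because $\Gamma$ is specified to map into $\mathbb{R}_+$, the PBS price is non-negative, so the maximum collapses to $R_{b,t} = \Gamma(\{0\}_{i \in \mathcal{B}}, \{Y_{j,t}\}_{j \in \beta})$. Crucially, the right-hand side no longer depends on the buyer index $b$, so $R_{b,t} = R_{b^\prime,t}$ almost surely for all $b, b^\prime \in \mathcal{B}$. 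This is precisely the homogeneous-MEV-extraction condition of Proposition \ref{prop:main3}.

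Next I would verify that the remaining hypotheses coincide exactly with those of Proposition \ref{prop:main3}: risk-neutrality $\Pi_b(x) = x$ is assumed directly in the Corollary, homogeneous MEV extraction follows from the step above, and capital costs are left heterogeneous (which is what makes $r_{(1)}$ and $r_{(2)}$ meaningful). Invoking Proposition \ref{prop:main3} then delivers the MEV-capture ratio $\chi = \frac{1}{1 + r_{(2)} \cdot N}$ and the centralized ticket distribution $k_b = \frac{N}{|\overline{\overline{\mathcal{B}}}|}$ for $r_b = r_{(1)}$ and $k_b = 0$ otherwise, verbatim.

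The hard part is essentially nonexistent. The only subtlety worth flagging is the role of the non-negative codomain of $\Gamma$ in eliminating the outside option $X_{b,t} = 0$ from the maximum: this is what upgrades the payoff from being merely weakly dominated by the PBS price to being genuinely identical across buyers, which in turn is what qualifies the setting as homogeneous rather than heterogeneous. With symmetry established, the conclusion is inherited wholesale from Proposition \ref{prop:main3}, and no separate computation of $\chi$ or the holdings is required.
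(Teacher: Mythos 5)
Your reduction is correct and is essentially the paper's own argument: the paper likewise observes that $X_{b,t}=0$ for all buyers makes the pay-offs symmetric and then inherits the conclusion from the earlier risk-neutral, homogeneous-extraction, heterogeneous-capital-cost result, with your extra care about $\max\{\Gamma(\cdot),0\}=\Gamma(\cdot)$ being a harmless (and welcome) refinement. Note that the paper's text cites Proposition~\ref{prop:main4} rather than Proposition~\ref{prop:main3} for this step, but that appears to be a cross-referencing slip (the appendix proofs of those two propositions are themselves interchanged relative to their statements), and the proposition you invoke is the one whose stated conclusion matches the corollary verbatim.
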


Corollary \ref{prop:main6} follows directly from Proposition \ref{prop:main4}. More explicitly, when ET buyers have no MEV extraction ability, then the pay-offs for all ET buyers are symmetric. In turn, the setting is a special case of that studied in Proposition \ref{prop:main4} and the corresponding results therefore follow. Crucially, under PBS, capital costs become particularly important. Indeed, even with no MEV extraction ability, a small set of investors could dominate the ET buyer market. To clarify that point, we offer the following further corollary:

\begin{corollary}
Assume that there exists a set of large investors, $\mathcal{I} \subset \mathcal{B}$, where this designation means that they possess no MEV extraction ability $($i.e., $X_{b,t} = 0$ for $b \in \mathcal{I})$ and possess a strictly lower cost of capital than all other investors (i.e., For all $b\in \mathcal{I}: r_b  < \min_{j \in \mathcal{B}/\mathcal{I}}~r_j$). Moreover, as before, assume that anyone with MEV extraction ability does not participate in buying tickets so that $X_{b,t} = 0$ for all $b \in \mathcal{B}$. 

Then, \textbf{all} ETs are purchased by large investors:

$$
\sum\limits_{b \in \mathcal{I}} k_b = N
$$
\label{prop:main7}
\end{corollary}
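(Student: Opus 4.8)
The plan is to reduce the statement directly to Corollary \ref{prop:main6}. First I would observe that the hypothesis $X_{b,t} = 0$ for all $b \in \mathcal{B}$ renders the outside option of building the block worthless for every buyer, so the pay-off in Equation \eqref{eqn:investormevpayoff} collapses to the common value $R_{b,t} = \Gamma(\{0\}_{i \in \mathcal{B}}, \{Y_{j,t}\}_{j \in \beta})$, identical across all buyers. This is precisely the setting of Corollary \ref{prop:main6}, whose conclusion I may invoke: there exists an equilibrium in which only buyers attaining the lowest cost of capital $r_{(1)} = \min_{b \in \mathcal{B}} r_b$ hold tickets, each such buyer holds $N / |\overline{\overline{\mathcal{B}}}|$ of them, and every buyer with $r_b > r_{(1)}$ holds none. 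Here $\overline{\overline{\mathcal{B}}} = \{ b \in \mathcal{B} : r_b = r_{(1)} \}$ denotes the set of minimal-cost-of-capital buyers.

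Next I would identify the minimizing set $\overline{\overline{\mathcal{B}}}$ as a subset of the large investors. The hypothesis states that every large investor has strictly smaller cost of capital than every non-large investor, i.e. $r_b < \min_{j \in \mathcal{B} \setminus \mathcal{I}} r_j$ for all $b \in \mathcal{I}$. Consequently the global minimum $r_{(1)}$ is attained within $\mathcal{I}$, and no buyer outside $\mathcal{I}$ can attain it, since any such buyer has cost of capital at least $\min_{j \in \mathcal{B} \setminus \mathcal{I}} r_j > r_{(1)}$. Therefore $\overline{\overline{\mathcal{B}}} \subseteq \mathcal{I}$.

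Finally, combining the two observations, every buyer holding a positive number of tickets lies in $\mathcal{I}$, so
$$\sum_{b \in \mathcal{I}} k_b = \sum_{b \in \overline{\overline{\mathcal{B}}}} k_b = N,$$
which is the claim. Because the result is a direct specialization of Corollary \ref{prop:main6}, I do not expect a substantive obstacle; the only point requiring care is the strictness of the cost-of-capital separation in the hypothesis, which is exactly what guarantees that no non-large investor ties for the minimum and hence that the centralization lands entirely within the large-investor set $\mathcal{I}$.
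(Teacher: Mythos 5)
Your proposal is correct and matches the paper's (implicit) argument: the paper offers no separate proof of this corollary, treating it exactly as you do---as an immediate specialization of Corollary \ref{prop:main6} once one notes that the strict cost-of-capital separation forces the set of minimal-$r_b$ buyers to lie inside $\mathcal{I}$. Your handling of the strictness, which rules out ties from outside $\mathcal{I}$, is precisely the one point that needs checking, and you check it.
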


Corollary \ref{prop:main7} establishes that, with PBS, the ET market is dominated more so by institutions with financial advantages rather than building advantages. More explicitly, risk-neutral entities with the lowest capital costs would hold \textit{all} ETs even without \textit{any} MEV extraction abilities. The intuition is straight-forward: given an PBS market, any investor may purchase an ET and sell to the buyer market during the slot thereby nullifying the need for any MEV extraction skills. Nonetheless, buying ETs requires locking up capital and taking risk and thus investors with low cost of capital and high risk tolerance are likely to dominate ET holdings when there exists an PBS market.

\subsubsection*{Full Case: Builders buy ETs.}

We now consider the case where builders participate in PBS and show that MEV extraction ability and capital costs both play a crucial role in determining the distribution of ET holders.

\begin{proposition}{ET Buyers with PBS}\\
For simplicity, we assume that buyers are risk-neutral, $\forall b: \Pi_b(x) = x$. In that case, all tickets are bought by buyers with the highest valuation, $\overline{P}_{(1)}$, which is given explicitly as follows:

$$
\overline{P}_{(1)} = \underset{b \in \mathcal{B}}{\max}~\frac{\mathbb{E}[R_{b,t}]}{1+r_b \cdot N}
$$

More formally, we have the following result:

$$
\sum\limits_{b \in \overline{B}} k_b = N
$$

where $\overline{B}$, defined in Equation \eqref{eqref:bestbuyerset}, is the set of buyers with valuation $\overline{P}_{(1)}$.

\label{prop:main8}
\end{proposition}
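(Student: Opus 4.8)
The plan is to obtain Proposition~\ref{prop:main8} as a direct specialization of the general equilibrium characterization in Proposition~\ref{prop:eqsoln} to the risk-neutral PBS pay-off~\eqref{eqn:investormevpayoff}. The only ingredient that is new relative to the no-PBS analysis is that $R_{b,t}$ is now the random variable $\max\{\Gamma(\{X_{i,t}\}_{i\in\mathcal{B}},\{Y_{j,t}\}_{j\in\beta}),X_{b,t}\}$; but since Proposition~\ref{prop:eqsoln} was stated for an arbitrary collection of non-negative, finite-mean pay-offs $\{R_{b,t}\}_{b\in\mathcal{B}}$, it applies verbatim here. So the first step is simply to check that the PBS pay-off is admissible: it is non-negative because $\Gamma$ takes values in $\mathbb{R}_+$ and $X_{b,t}\ge 0$, and it has a finite first moment provided $\Gamma$ and the $X_{b,t}$ do, which I take to be part of the standing moment assumptions. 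Granting admissibility, Proposition~\ref{prop:eqsoln}(ii) immediately yields the holdings claim $\sum_{b\in\overline{B}}k_b=N$, with $\overline{B}$ the set of buyers attaining the largest maximal price as in~\eqref{eqref:bestbuyerset}.

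It then remains to evaluate $\overline{P}_b$, and hence $\overline{P}_{(1)}$, under risk-neutrality. Here I would reuse the computation already carried out for the Homogeneous Risk-Neutral Equilibrium Solution corollary, but now at the level of an individual (possibly heterogeneous) buyer. Plugging $\Pi_b(x)=x$ into the defining optimization~\eqref{eqn:maxPb}, the constraint $\frac{1}{N}\mathbb{E}[R_{b,t}-P]-r_b P\ge 0$ is affine and strictly decreasing in $P$, so the maximizing $P$ is the unique point at which it binds, namely $\overline{P}_b=\frac{\mathbb{E}[R_{b,t}]}{1+r_b N}$. Taking the maximum over $b$ gives $\overline{P}_{(1)}=\max_{b\in\mathcal{B}}\frac{\mathbb{E}[R_{b,t}]}{1+r_b N}$, which is exactly the stated expression, and combining with the first step completes the argument.

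The main point to handle with care — rather than a deep obstacle — is that the PBS pay-offs are neither symmetric nor independent across buyers: $R_{b,t}$ shares the common PBS component $\Gamma(\cdots)$ across all $b$ and differs only through the buyer-specific outside option $X_{b,t}$. One might worry that this correlation structure affects the equilibrium. The observation that dispels this worry is that, under risk-neutrality, each buyer's maximal price $\overline{P}_b$ depends on $R_{b,t}$ only through its mean $\mathbb{E}[R_{b,t}]=\mathbb{E}[\max\{\Gamma(\cdots),X_{b,t}\}]$; the joint law of $\{R_{b,t}\}_{b}$ plays no role in Proposition~\ref{prop:eqsoln}, whose equilibrium conditions are expressed entirely through the marginal valuations $\{\overline{P}_b\}_{b}$. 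Hence the argument reduces cleanly to the heterogeneous risk-neutral cases of Propositions~\ref{prop:main3} and~\ref{prop:main4}, with both MEV extraction ability and capital cost folded into the single index $\frac{\mathbb{E}[R_{b,t}]}{1+r_b N}$.
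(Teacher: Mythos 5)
Your proposal is correct and follows essentially the same route as the paper: the paper's proof likewise reduces Proposition~\ref{prop:main8} to the general equilibrium characterization (it constructs the equilibrium $P=\overline{P}_{(2)}$ with holdings split over $\overline{\mathcal{B}}$ and then invokes Lemma~\ref{lemma:necconditions2} for $\sum_{b\in\overline{\mathcal{B}}}k_b=N$), treating the PBS pay-off~\eqref{eqn:investormevpayoff} as just another admissible $R_{b,t}$. Your version is in fact slightly more complete, since you explicitly derive $\overline{P}_b=\mathbb{E}[R_{b,t}]/(1+r_bN)$ and note why the cross-buyer correlation induced by the common $\Gamma$ term is irrelevant, two points the paper leaves implicit.
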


Proposition \ref{prop:main8} demonstrates that, when builders participate in PBS, then the holdings of ETs is determined both by the builder's MEV extraction abilities and the capital costs of buyers more generally. More explicitly, each buyer's valuation is increasing in their pay-off which depends asymmetrically on their MEV extraction ability (see Equation \ref{eqn:investormevpayoff}). Nonetheless, each buyer's valuation also depends on her cost of capital. Ultimately, only those buyers with the highest valuation, $\overline{P}_{(1)}$, will purchase ETs.

To highlight the significance of capital costs, we conclude with the following result that highlights the possibility that large investors, and not builders, dominate the ET buyer market:

\begin{proposition}
Large Investors Dominate with PBS\\
Suppose that there exists a set of large investors, $\mathcal{I} \subset \mathcal{B}$, where this designation means that each member of the set possesses no MEV extraction ability $($i.e., $X_{b,t} = 0$ for $b \in \mathcal{I})$ but possesses especially low capital costs. In particular, we assume that their capital costs are sufficiently low that $$\forall i \in \mathcal{I}: r_i  < \frac{1}{N} \min_{b \in \mathcal{B}/\mathcal{I}}~ \Big( (1 + r_b \cdot N) \times \frac{\mathbb{E}[R_{\mathcal{I},t}]}{\mathbb{E}[R_{b,t}]} - 1\Big)$$ where $R_{\mathcal{I},t} = \Gamma\big(\{X_{i,t} \}_{i \in \mathcal{B}}, \{Y_{j,t} \}_{j \in \beta}\big)$ denotes the pay-off from buying an ET for each large investor. In this case, \textbf{all} ETs are purchased by large investors:

$$
\sum\limits_{b \in \mathcal{I}} k_b = N
$$

\label{prop:main9}
\end{proposition}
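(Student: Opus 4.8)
The plan is to reduce Proposition \ref{prop:main9} to a direct application of Proposition \ref{prop:main8} by showing that the stated capital-cost condition forces the set of highest-valuation buyers $\overline{\mathcal{B}}$ to be contained in the set of large investors $\mathcal{I}$. Since buyers are risk-neutral, Proposition \ref{prop:main8} gives each buyer's valuation in closed form as $\overline{P}_b = \frac{\mathbb{E}[R_{b,t}]}{1 + r_b \cdot N}$, so the entire argument can be carried out by comparing these explicit valuations across the two groups of buyers.

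First I would compute the valuation of a large investor. For $i \in \mathcal{I}$ we have $X_{i,t} = 0$, so by Equation \eqref{eqn:investormevpayoff} the pay-off collapses to $R_{i,t} = \max\{\Gamma(\{X_{k,t}\}_{k \in \mathcal{B}}, \{Y_{j,t}\}_{j \in \beta}),\, 0\} = R_{\mathcal{I},t}$, using that $\Gamma$ is non-negative so the outside option is never binding. Hence $\mathbb{E}[R_{i,t}] = \mathbb{E}[R_{\mathcal{I},t}]$, and every large investor shares the common numerator $\mathbb{E}[R_{\mathcal{I},t}]$ in its valuation, differing only through its own cost of capital: $\overline{P}_i = \frac{\mathbb{E}[R_{\mathcal{I},t}]}{1 + r_i \cdot N}$.

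Next I would establish the domination claim $\overline{P}_i > \overline{P}_b$ for every $i \in \mathcal{I}$ and every $b \in \mathcal{B}/\mathcal{I}$. Writing out $\frac{\mathbb{E}[R_{\mathcal{I},t}]}{1 + r_i \cdot N} > \frac{\mathbb{E}[R_{b,t}]}{1 + r_b \cdot N}$ and clearing denominators, this inequality rearranges precisely to $r_i < \frac{1}{N}\big((1 + r_b \cdot N) \cdot \frac{\mathbb{E}[R_{\mathcal{I},t}]}{\mathbb{E}[R_{b,t}]} - 1\big)$. Taking the minimum over $b \in \mathcal{B}/\mathcal{I}$ shows that the hypothesis of the proposition is exactly the statement that this strict inequality holds simultaneously for all non-large buyers $b$. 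Thus every large investor strictly out-values every non-large investor, which forces $\overline{P}_{(1)} = \max_{i \in \mathcal{I}} \overline{P}_i$ and therefore $\overline{\mathcal{B}} \subseteq \mathcal{I}$.

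Finally I would invoke Proposition \ref{prop:main8}, which states that the highest-valuation set absorbs all tickets, $\sum_{b \in \overline{\mathcal{B}}} k_b = N$, together with the equilibrium characterization in Proposition \ref{prop:eqsoln} giving $k_b = 0$ for every $b \notin \overline{\mathcal{B}}$. Since $\overline{\mathcal{B}} \subseteq \mathcal{I}$, all nonzero holdings lie in $\mathcal{I}$, so summing over the large investors yields $\sum_{b \in \mathcal{I}} k_b = \sum_{b \in \overline{\mathcal{B}}} k_b = N$, as desired. The computation is routine throughout; the only points requiring genuine care are the algebraic rearrangement identifying the proposition's capital-cost threshold with the domination inequality, and the observation that because $\Gamma \geq 0$ the large investors' zero extraction ability leaves their pay-off equal to the common PBS price $R_{\mathcal{I},t}$ rather than a maximum of two differing quantities.
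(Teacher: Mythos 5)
Your proposal is correct and follows essentially the same route as the paper: establish that the capital-cost threshold is exactly the condition $\overline{P}_i > \overline{P}_b$ under the risk-neutral valuation $\overline{P}_b = \mathbb{E}[R_{b,t}]/(1+r_b N)$, conclude $\overline{\mathcal{B}} \subseteq \mathcal{I}$, and then let the equilibrium characterization (Lemma \ref{lemma:necconditions2} / Proposition \ref{prop:main8}) deliver $\sum_{b\in\mathcal{I}} k_b = N$. In fact you supply the algebraic rearrangement and the observation that $\Gamma \ge 0$ makes $R_{i,t}=R_{\mathcal{I},t}$ for large investors, both of which the paper's own proof leaves implicit (it asserts the containment in one line, with what appears to be a typo writing $\mathcal{B}\subseteq\mathcal{I}$ for $\overline{\mathcal{B}}\subseteq\mathcal{I}$).
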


Proposition \ref{prop:main9} establishes that, given sufficiently low capital costs for investors with no MEV extraction ability, then these investors would dominate the ET buyer market. Buying ETs to either sell through PBS or to build during the eventual slot requires holding ETs in advance. In turn, the cost of holding ETs is a capital cost related to the cost of funding. When large investors are sufficiently advantaged with regard to funding costs, this advantage may exceed the advantage of builders and thus large investors could dominate the ET buyer market. 

\section{Discussion and Extensions}
\label{sec:discussion}

Our primary findings indicate that, without PBS, ETs can fully capture MEV when buyers are homogeneous, risk-neutral, and face no capital costs. However, the efficiency of MEV extraction diminishes with increased risk aversion and capital costs. Additionally, heterogeneity among buyers leads to lower MEV capture, with the potential for a single buyer to dominate the market. The presence of a PBS mechanisms can mitigate the centralization but not fully, MEV extraction ability gives buyers an advantage even with PBS. Moreover, PBS introduces another centralization vector whereby, investors with low capital cost but no extraction ability, may come to dominate the ET market.

\subsection{Interpreting the Results in Practice}

In practice, buyers are heterogeneous with varying capital costs, varying levels of risk aversion, and differences in abilities to extract MEV. This is evidenced by 85\% of blocks currently being built by just three actors \cite{mev_pics}. This concentration is not merely due to a higher randomized block win rate but also a result of differences in MEV extraction abilities among buyers (i.e., for these buyers $\mathbb{E}[R_{b,t}] \geq \mathbb{E}[R_{b^\prime,t}]$  for $b \neq b'$). In the current MEV-Boost (PBS) system, ex-post bids are solicited for the right to compose a block, with the highest bid winning. This means that the winning builder has the highest ability to extract MEV from a block, and about 94\% of blocks built using MEV-Boost are built by these three dominant actors \cite{mev_pics}, highlighting their superior MEV extraction capabilities. 

Therefore, it is reasonable to assume that there will be builder heterogeneity when ETs are implemented. Moreover, it is likely that even after ETs are implemented PBS will continue to be popular. Hence, the results from Proposition \ref{prop:main8} and  Proposition \ref{prop:main9} are most likely what will be seen in practice. Namely, ETs will be owned by those with a balance of the lowest capital costs and the best MEV extraction ability. There is also the potential for large investors to dominate the market, assuming they have far lower capital costs than builders. This does not preclude the possibility that Execution Payloads are ultimately constructed by those with the best ex-post MEV extraction ability.

A critical consideration is whether ETs offer a superior solution compared to the current system. As explained in the background section, the objective of these new MEV capture mechanisms is to enhance the allocation of MEV and mitigate overall centralization of the validator set.

\subsubsection{Allocation}
From an allocation perspective, ETs capture more MEV than the current system, which currently captures zero MEV. Even though ETs capture net more than the existing paradigm, there could be even better solutions that can capture more MEV, such as some version of Execution Auctions (EAs). More work is needed to analyze whether those solutions are superior.

\subsubsection{Centralization}
Regarding centralization, our research sheds light on the specific centralization vectors introduced by ETs as a function of MEV extraction and cost of capital differences. This remains true even in the presence of PBS as it is specified today. Looking at the roles of different participants separately, we can conclude that: (1) ETs allow the protocol to capture and distribute MEV which reduces centralization pressure on the validator set, but further research is required to determine if new timing games emerge as a result of this structural change; (2) ETs introduce ET holders a new actor to the MEV pipeline, we show that tickets are likely to be concentrated among a subset of buyers (more research is needed to assess the effects of that centralization in practice); (3) ETs do not alleviate centralization in the builder market.

\subsection{Limitations and Extensions}

\subsubsection{Exogenous Factors} The model does not account for time varying exogenous factors that can affect buyers. For instance, macroeconomic events such as interest rate cuts can alter the cost of capital for buyers. Additionally, buyers' abilities to extract MEV might vary in different environments. For example, some buyers might perform better when long-tail assets are more volatile, while others might excel with short-term asset volatility. Moreover, builders receive proprietary order flow, and changes in the quality of this flow can impact their MEV extraction ability. Future implementations of the model should incorporate a state variable representing the state of the world at time \( t \).

\subsubsection{Relaxing the i.i.d. Assumption}

For simplicity, we assumed that MEV extraction for each builder is i.i.d. across time and thus the endogenous stationary price of ETs is time-invariant. However, in practice, MEV is not i.i.d. and is often correlated over short increments, as MEV is related to market volatility \cite{LVR}. When MEV is high (low) in one block, adjacent blocks are likely to have high (low) MEV as well. By incorporating a state variable into the model, we could relax the i.i.d. assumption. This state variable would capture the temporal correlation of MEV and reflect the influence of market conditions on MEV extraction. 

\subsubsection{Time-invariance of the Price}

The assumption that the price of an ET is time-invariant fails if MEV is not i.i.d. or if there are time-varying exogenous influences. A future model with a relaxed i.i.d. assumption and/or includes a state variable will need to account for a temporally correlated and exogenously influenced price.

\subsubsection{Multi-block MEV}
The current model does not consider Multi-block MEV (MMEV).\footnote{Controlling consecutive blocks may yield higher MEV than the sum of the MEV from controlling individual blocks.} This introduces an endogenous centralization vector where a builder who controls block \( t \) might bid more aggressively for block \( t+1 \) as consecutive control could yield more MEV. This endogenous factor could significantly affect both allocation and centralization outcomes in the model. More research is needed to study MMEV and its effects on ETs.

\subsubsection{Mechanism for Selling ETs}
The model does not define the market mechanisms for selling Execution Tickets. The method of selling these tickets can influence builder behavior and potentially introduce new timing games and externalities. Future research should develop a comprehensive model for the sale of Execution Tickets and study its impact on the overall system dynamics.


\section*{Acknowledgments}
The authors acknowledge helpful discussions and comments from Brad Bachu, Joel Hasbrouck, Ruizhe Jia, Julian Ma and Xin Wan.

\appendix

\section*{Appendices}
\addcontentsline{toc}{section}{Appendices}

\section{Equilibrium Definition and Supplementary Lemmas}
\label{app:eqdefandlemmas}
\renewcommand{\theequation}{\thesection.\arabic{equation}}
\setcounter{equation}{0}

As is standard, our equilibrium definition requires that builders behave optimally and that the supply of ETs equates with its demand. More formally, the definition of equilibrium is given as follows:
\begin{definition} Equilibrium Definition\\
An equilibrium is a set of builder ET holdings, $\{ k_b \}_{b: b \in \mathcal{B}}$ and a price for ETs, $P$, such that the following conditions hold:
\begin{enumerate}
    \item \underline{All builder ET holdings are optimal}\par
    $\forall b \in \mathcal{B}: k_b$ solves Equation \eqref{eqn:ETpurchase}
    \smallskip
    \item \underline{ET price is such that ET market demand equates with supply}\par
    $\sum\limits_{b \in \mathcal{B}} k_b = N$
\end{enumerate}
\label{def:equilibrium}
\end{definition}

\begin{lemma} Optimal Builder ET Holdings\\
For any Builder $b \in \mathcal{B}$, the optimal ET holding, $k_b$, is given as follows when $P \neq \overline{P}_b$:
\begin{equation}
    k_b = \begin{cases} 0 & \text{ if } P > \overline{P}_b \\
        N & \text{ if } P < \overline{P}_b \\
    \end{cases}
\end{equation}
Moreover, when $P = \overline{P}_b$, then any feasible ET holding (i.e., any $k_b \in \{0,...,N\}$) is optimal.
\label{lemma:optkb}
\end{lemma}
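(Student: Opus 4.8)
The plan is to reduce the buyer's maximization \eqref{eqn:ETpurchase} to a linear program in the single integer variable $k_b$ and then read off a bang-bang optimizer from the sign of the per-ticket net value that already appears in the definition \eqref{eqn:maxPb} of $\overline{P}_b$.

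First I would evaluate the objective $\mathbb{E}[\Pi_b(\text{P\&L}_{b,t})] - r_b \cdot P \cdot k_b$ explicitly as a function of $k_b$. Because exactly one ticket is drawn per slot, at most one of buyer $b$'s $k_b$ tickets can win, so $\mathcal{I}_{b,t}$ acts as a genuine $0/1$ indicator with win probability $k_b/N$ (consistent with the portfolio recursion \eqref{eqn:vt}), independent of the realized pay-off $R_{b,t}$. On the losing event $\text{P\&L}_{b,t}=0$ and $\Pi_b(0)=0$ contributes nothing, so conditioning on the win event gives
\begin{equation}
\mathbb{E}[\Pi_b(\text{P\&L}_{b,t})] = \frac{k_b}{N}\,\mathbb{E}[\Pi_b(R_{b,t}-P)],
\end{equation}
and the whole objective collapses to $k_b \cdot g_b(P)$, where $g_b(P) := \frac{1}{N}\mathbb{E}[\Pi_b(R_{b,t}-P)] - r_b \cdot P$ is precisely the functional whose zero locus defines $\overline{P}_b$.

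Since $k_b \mapsto k_b \cdot g_b(P)$ is linear on the feasible set $\{0,\dots,N\}$, its maximizer is immediate: $k_b=N$ if $g_b(P)>0$, $k_b=0$ if $g_b(P)<0$, and every $k_b\in\{0,\dots,N\}$ if $g_b(P)=0$. It then remains to convert the sign of $g_b(P)$ into the stated comparison against $\overline{P}_b$. Here I would invoke that $g_b$ is continuous (the interchange of limit and expectation justified by dominated convergence exactly as in the footnote to \eqref{eqn:maxPb}) and strictly decreasing in $P$, together with the fact that $\overline{P}_b$ is the maximum of the closed set $\{P\ge 0 : g_b(P)\ge 0\}$. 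Maximality forces $g_b(P)<0$ for every $P>\overline{P}_b$ (hence $k_b=0$), while continuity forces $g_b(\overline{P}_b)=0$, so strict monotonicity yields $g_b(P)>g_b(\overline{P}_b)=0$ for every $P<\overline{P}_b$ (hence $k_b=N$); at $P=\overline{P}_b$ the objective is identically zero in $k_b$, making every holding optimal.

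The main obstacle is the \emph{strict} inequality $g_b(P)>0$ for $P<\overline{P}_b$, that is, showing $g_b$ is strictly rather than merely weakly decreasing on the relevant range. When $r_b>0$ this is automatic, as the $-r_b P$ term alone makes $g_b$ strictly decreasing. The delicate case is $r_b=0$ with a general non-negative $\Pi_b$, where the expectation term can be flat (it vanishes once $P$ exceeds the essential supremum of $R_{b,t}$); there one must argue directly from $g_b(\overline{P}_b)=0$ that lowering $P$ below $\overline{P}_b$ still places $R_{b,t}-P$ in the strictly increasing region of $\Pi_b$ with positive probability, producing the strict gap. This monotonicity/strictness step, rather than the routine linearity reduction, is where the real care is required.
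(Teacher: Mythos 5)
Your proposal is correct and follows essentially the same route as the paper's proof: reduce the objective to a linear function of $k_b$ with per-ticket coefficient $\frac{1}{N}\mathbb{E}[\Pi_b(R_{b,t}-P)] - r_b P$, observe that this coefficient vanishes at $P=\overline{P}_b$ and is decreasing in $P$, and read off the bang-bang optimizer. The paper simply asserts that $\Pi_b'>0$ and $r_b\ge 0$ make the coefficient strictly decreasing, so your extra attention to the strictness when $r_b=0$ and $\Pi_b$ is flat on the non-positive half-line is a refinement of, not a departure from, the published argument.
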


\begin{proof}~\\
By definition, $k_b$ is the solution to Equation \eqref{eqn:ETpurchase}. More formally:
\begin{equation}
k_b \in \underset{k_b \in \{0,...,N\} }{\arg\max}~ \mathbb{E}[\Pi_b(\text{P\&L}_{b,t})] - r_{b} \cdot P \cdot k_b
\end{equation}
Then, expanding the expectation in the objective function and applying Equation \eqref{eqn:payoff} yields the following:
\begin{equation}
k_b \in \underset{k_b \in \{0,...,N\} }{\arg \max}~ \frac{k_b}{N}\Pi_b(R_{b,t} - P) - r_{b} \cdot P \cdot k_b
\end{equation}
Crucially, the objective function is linear in $k_b$ with coefficient $\frac{1}{N}\Pi_b(R_{b,t} - P) - r_{b} \cdot P$. Moreover, by definition of $\overline{P}_b$ (see Equation \ref{eqn:maxPb}), the coefficient is zero at $P = \overline{P}_b$, thereby implying the last part of the result, that any feasible ET holding is optimal whenever $P = \overline{P}_b$. Moreover, $\Pi_b^\prime > 0$ and $r_b \geq 0$ imply that the coefficient is strictly decreasing $P$ so that it is strictly negative whenever $P > \overline{P}_b$ and strictly positive whenever $P < \overline{P}_b$. In turn, the strictly negative coefficient when $P > \overline{P}_b$ implies $k_b = 0$ whenever $P > \overline{P}_b$ and the strictly positive coefficient when $P < \overline{P}_b$ implies $k_b = N$ whenever $P < \overline{P}_b$, thereby completing the proof.
\end{proof}

\begin{lemma} Necessary Condition for Equilibrium\\
$P \in [\overline{P}_{(2)}, \overline{P}_{(1)}]$ is a necessary condition for equilibrium.
\label{lemma:necconditions}
\end{lemma}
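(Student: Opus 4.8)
The plan is to prove the contrapositive directly: I would show that any price outside the interval $[\overline{P}_{(2)}, \overline{P}_{(1)}]$ makes the market-clearing condition $\sum_{b \in \mathcal{B}} k_b = N$ impossible once every buyer holds optimally. The only tool needed is Lemma~\ref{lemma:optkb}, which pins down each buyer's optimal holding by the sign of $P - \overline{P}_b$: a buyer holds $0$ when $P > \overline{P}_b$, holds $N$ when $P < \overline{P}_b$, and is indifferent (any feasible holding optimal) when $P = \overline{P}_b$. Since by Definition~\ref{def:equilibrium} an equilibrium requires both optimality and market clearing, violating clearing suffices to rule out a candidate price.

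First I would handle the upper bound. Suppose $P > \overline{P}_{(1)}$. Because $\overline{P}_{(1)} = \max_b \overline{P}_b$, we have $P > \overline{P}_b$ for every $b \in \mathcal{B}$, so Lemma~\ref{lemma:optkb} forces $k_b = 0$ for all $b$. Then $\sum_b k_b = 0$, contradicting market clearing since $N \geq 1$. Hence $P \leq \overline{P}_{(1)}$ in any equilibrium.

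Next I would handle the lower bound by showing that $P < \overline{P}_{(2)}$ always forces at least two distinct buyers to hold $N$, so that $\sum_b k_b \geq 2N > N$. Here I split on the piecewise definition of $\overline{P}_{(2)}$. If $|\overline{\mathcal{B}}| > 1$, then $\overline{P}_{(2)} = \overline{P}_{(1)}$, so $P < \overline{P}_b$ for every $b \in \overline{\mathcal{B}}$, and each of the (at least two) such buyers holds $N$. If instead $|\overline{\mathcal{B}}| = 1$, let $b^\star$ be its unique element and let $b'$ attain $\overline{P}_{b'} = \overline{P}_{(2)}$ over $\overline{\mathcal{B}}^c$; then $P < \overline{P}_{(2)} = \overline{P}_{b'} \leq \overline{P}_{(1)} = \overline{P}_{b^\star}$, so both $b'$ and $b^\star$ strictly prefer $k = N$. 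Either way two distinct buyers each hold $N$, violating market clearing, which yields $P \geq \overline{P}_{(2)}$.

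The argument is essentially sign-of-coefficient bookkeeping, so the only point requiring care is the lower bound's case split: I must verify that the piecewise definition of $\overline{P}_{(2)}$ uniformly guarantees the existence of two distinct buyers $b$ with $P < \overline{P}_b$. This is where the chain $P < \overline{P}_{(2)} \leq \overline{P}_{(1)}$ is used to pull in the top buyer $b^\star$ alongside the runner-up $b'$ in the singleton case. Combining both bounds gives $P \in [\overline{P}_{(2)}, \overline{P}_{(1)}]$ as a necessary condition, completing the proof.
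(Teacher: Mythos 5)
Your proposal is correct and follows essentially the same route as the paper: invoke Lemma~\ref{lemma:optkb} to show that $P > \overline{P}_{(1)}$ forces every buyer to hold $0$ and $P < \overline{P}_{(2)}$ forces at least two buyers to hold $N$, either of which violates the market-clearing condition of Definition~\ref{def:equilibrium}. Your explicit case split on the piecewise definition of $\overline{P}_{(2)}$ merely fills in the step the paper states without elaboration ("$k_b = N$ for at least two builders"), and it does so correctly.
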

\begin{proof}~\\
If $P < \overline{P}_{(2)}$, then Lemma \ref{lemma:optkb} implies $k_b = N$ for at least two builders and thus the second condition of Equilibrium Definition \ref{def:equilibrium} cannot hold, implying equilibrium cannot arise. Moreover, if $P > \overline{P}_{(1)}$, then Lemma \ref{lemma:optkb} implies $k_b = 0$ for all builders and thus the second condition of Equilibrium Definition \ref{def:equilibrium} cannot hold, implying equilibrium cannot arise. Putting the two aforementioned statements together, $P \notin [\overline{P}_{(2)}, \overline{P}_{(1)}]$ implies equilibrium cannot arise. Taking the contrapositive then implies the desired result, namely that an equilibrium implies $P \in [\overline{P}_{(2)}, \overline{P}_{(1)}]$.
    
\end{proof}

\begin{lemma} Sufficient Condition for Equilibrium\\
$P \in [\overline{P}_{(2)}, \overline{P}_{(1)}]$ is a sufficient condition for equilibrium in the sense that there always exist ET holdings, $\{ k_b \}_{b \in \mathcal{B}}$, consistent with $P \in [\overline{P}_{(2)}, \overline{P}_{(1)}]$ such that $P$ and $\{ k_b \}_{b \in \mathcal{B}}$ satisfy the Equilibrium Definition \eqref{def:equilibrium}.
\label{lemma:suffconditions}
\end{lemma}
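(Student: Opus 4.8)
The plan is to take an arbitrary $P \in [\overline{P}_{(2)}, \overline{P}_{(1)}]$ and explicitly construct ET holdings $\{k_b\}_{b \in \mathcal{B}}$ that satisfy both conditions of Definition \ref{def:equilibrium}. Since Lemma \ref{lemma:optkb} already pins down each buyer's optimal holdings as a function of the sign of $\overline{P}_b - P$, the only real work is to verify that the \emph{forced} holdings are compatible with the market-clearing requirement $\sum_b k_b = N$, and to use the freedom available at buyers with $\overline{P}_b = P$ to absorb any slack.

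First I would partition $\mathcal{B}$ according to the three cases of Lemma \ref{lemma:optkb}: buyers with $\overline{P}_b < P$ (for whom $k_b = 0$ is forced), buyers with $\overline{P}_b > P$ (for whom $k_b = N$ is forced), and buyers with $\overline{P}_b = P$ (for whom every $k_b \in \{0,\dots,N\}$ is optimal). The key structural observation, which I would establish next, is that the forced-to-$N$ group contains \emph{at most one} buyer. Indeed, $P \ge \overline{P}_{(2)}$ together with the definition of $\overline{P}_{(2)}$ as the second-largest value among $\{\overline{P}_b\}$ implies that at most one buyer can have a valuation strictly exceeding $P$: any two distinct buyers strictly above $P$ would both exceed $\overline{P}_{(2)}$, contradicting that $\overline{P}_{(2)}$ is the runner-up value.

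With that in hand, the construction splits into two cases. If some (necessarily unique) buyer $b^\star$ has $\overline{P}_{b^\star} > P$, I would set $k_{b^\star} = N$ and $k_b = 0$ for every other buyer; buyers with $\overline{P}_b = P$ are assigned $0$, which their flexibility permits, and the holdings sum to exactly $N$. If instead no buyer strictly exceeds $P$, then $P \le \overline{P}_{(1)}$ forces $P = \overline{P}_{(1)}$, so the buyers with $\overline{P}_b = P$ are precisely the nonempty set $\overline{\mathcal{B}}$; I would then distribute the $N$ tickets among them in any market-clearing fashion (for instance, assigning all $N$ to a single member of $\overline{\mathcal{B}}$), while all buyers with $\overline{P}_b < P$ hold $0$. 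In both cases optimality holds buyer-by-buyer by Lemma \ref{lemma:optkb}, and the holdings sum to $N$, establishing the claim.

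I expect the main obstacle to be the structural step showing that at most one buyer can be forced to hold $N$; this is what guarantees the forced holdings never over-subscribe the supply, and it is exactly the point at which the definition of $\overline{P}_{(2)}$ (including its degenerate value $\overline{P}_{(1)}$ when $|\overline{\mathcal{B}}| > 1$, which collapses the interval to the single point $\overline{P}_{(1)}$) does the load-bearing work. The remaining bookkeeping — confirming feasibility of the distribution and that $N \ge 1$ makes the assignment possible — is routine.
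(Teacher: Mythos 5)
Your proof is correct, and it in fact establishes more than the paper's own argument does. The paper's proof is a one-line construction: it exhibits the single equilibrium $P = \overline{P}_{(2)}$ with $k_b = N \cdot \mathcal{I}(b \in \overline{\mathcal{B}})/|\overline{\mathcal{B}}|$ and leaves verification to the reader; it never addresses prices in the interior of $[\overline{P}_{(2)}, \overline{P}_{(1)}]$. You instead fix an \emph{arbitrary} $P$ in the interval and build market-clearing holdings for it, and your load-bearing step --- that at most one buyer can have $\overline{P}_b > P$ once $P \geq \overline{P}_{(2)}$, so the buyers forced to demand $N$ never over-subscribe the supply --- is exactly the right structural observation; your case split (when $|\overline{\mathcal{B}}| > 1$ the interval degenerates to the point $\overline{P}_{(1)}$, and when $|\overline{\mathcal{B}}| = 1$ every buyer outside $\overline{\mathcal{B}}$ sits weakly below $\overline{P}_{(2)} \leq P$) is sound. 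Your version is the one actually needed if Proposition \ref{prop:eqsoln} is read, as intended, as saying that \emph{every} price in $[\overline{P}_{(2)}, \overline{P}_{(1)}]$ is supportable in equilibrium. A further small advantage of your construction: by assigning all $N$ tickets to a single indifferent buyer in the boundary case you sidestep the integrality issue that the paper's equal split $N/|\overline{\mathcal{B}}|$ quietly ignores, since holdings are restricted to $\{0,\dots,N\}$.
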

\begin{proof}~\\
We provide a constructive proof. More explicitly, direct verification, using Lemma \ref{lemma:optkb}, reveals that the following solution always satisfies Equilibrium Definition \eqref{def:equilibrium}:
\begin{equation}
    P = \overline{P}_{(2)}, \qquad \forall b \in \mathcal{B}: k_b = \frac{N \cdot \mathcal{I}(b \in \overline{\mathcal{B}})}{|\overline{\mathcal{B}}|}
\end{equation}
where $\overline{\mathcal{B}}$ is defined in Equation \eqref{eqref:bestbuyerset} and $|X|$ refers to the cardinality of the set X. As an aside, we emphasize that this constructed solution implies that one builder holds all ETs whenever $\overline{P}_{(2)} < \overline{P}_{(1)}$. In particular, in that case, $|\overline{\mathcal{B}}| = 1$. 
\end{proof}
  
\begin{lemma} Necessary Condition for Equilibrium II\\
The following condition must hold in any equilibrium: $\sum\limits_{b:b \in \overline{\mathcal{B}}} k_b = N$
\label{lemma:necconditions2}
\end{lemma}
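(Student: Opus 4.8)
The plan is to show that every buyer outside $\overline{\mathcal{B}}$ holds no tickets, after which the market-clearing condition $\sum_{b \in \mathcal{B}} k_b = N$ from Definition \ref{def:equilibrium} immediately yields $\sum_{b \in \overline{\mathcal{B}}} k_b = N$. So the whole task reduces to proving that $k_b = 0$ for every $b \notin \overline{\mathcal{B}}$, i.e., for every $b$ with $\overline{P}_b < \overline{P}_{(1)}$.

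First I would dispose of the easy case. Fix such a $b$. If the equilibrium price satisfies $P > \overline{P}_b$, then Lemma \ref{lemma:optkb} gives $k_b = 0$ directly, with no further work. The only residual possibility is $P \leq \overline{P}_b$; since $\overline{P}_b < \overline{P}_{(1)}$, this forces $P < \overline{P}_{(1)}$, and the crux of the argument is to show that even in this situation $b$ is squeezed out to $k_b = 0$.

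To handle the residual case I would split on the cardinality of $\overline{\mathcal{B}}$. If $|\overline{\mathcal{B}}| > 1$, then by definition $\overline{P}_{(2)} = \overline{P}_{(1)}$, and Lemma \ref{lemma:necconditions} gives $P \geq \overline{P}_{(2)} = \overline{P}_{(1)}$, contradicting $P < \overline{P}_{(1)}$; hence this subcase is vacuous. If $|\overline{\mathcal{B}}| = 1$, let $b^\star$ be its unique element, so that $\overline{P}_{b^\star} = \overline{P}_{(1)} > P$; Lemma \ref{lemma:optkb} then forces $k_{b^\star} = N$. Since every holding is nonnegative and market clearing requires the holdings to sum to exactly $N$, every buyer other than $b^\star$ must hold zero, in particular our chosen $b$.

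The main obstacle is precisely the boundary configuration $P = \overline{P}_b$ with $b \notin \overline{\mathcal{B}}$, which can occur only when $P = \overline{P}_{(2)} < \overline{P}_{(1)}$ and $|\overline{\mathcal{B}}| = 1$. There Lemma \ref{lemma:optkb} declares every $k_b \in \{0,\dots,N\}$ individually optimal for $b$, so single-buyer optimality alone does not pin down $k_b = 0$; the zero holding is instead forced globally, by combining the fact that the unique dominant buyer $b^\star$ already absorbs all $N$ tickets with the market-clearing identity. Everything else is a routine application of the two preceding lemmas.
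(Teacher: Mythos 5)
Your proof is correct and follows essentially the same route as the paper's: both reduce the claim to Lemma \ref{lemma:optkb}, Lemma \ref{lemma:necconditions}, and market clearing, and both resolve the only delicate configuration ($P = \overline{P}_{(2)} < \overline{P}_{(1)}$, where individual optimality leaves $k_b$ undetermined for $b \notin \overline{\mathcal{B}}$) by observing that the unique dominant buyer must demand all $N$ tickets, so market clearing squeezes everyone else to zero. The only difference is cosmetic: you case-split on $P$ relative to $\overline{P}_b$ and then on $|\overline{\mathcal{B}}|$, whereas the paper cases directly on $P$ relative to $\overline{P}_{(2)}$ and $\overline{P}_{(1)}$.
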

\begin{proof}~\\
Lemma \eqref{lemma:necconditions} implies that $P \geq P_{(2)}$ in any equilibrium and thus there are only three possible cases: (i) $P > \overline{P}_{(2)}$, (ii) $P = \overline{P}_{(2)} = \overline{P}_{(1)}$ and (iii) $P = \overline{P}_{(2)} < \overline{P}_{(1)}$. We prove the result in each case separately below.\\

Case (i): $b \notin \overline{\mathcal{B}} \implies \overline{P}_b \leq \overline{P}_{(2)} < P \implies k_b = 0$ where the last implication follows from Lemma \ref{lemma:optkb}. Then, via the second part of Definition \eqref{def:equilibrium}, $\sum_{b:b \in \overline{\mathcal{B}}} k_b = N - \sum_{b:b \notin \overline{\mathcal{B}}} k_b = N - 0 = N$ as desired. \\

Case (ii): $b \notin \overline{\mathcal{B}} \implies \overline{P}_b < \overline{P}_{(1)} = \overline{P}_{(2)} = P \implies k_b = 0$ where the last implication follows from Lemma \ref{lemma:optkb}. Then, via the second part of Definition \eqref{def:equilibrium}, $\sum_{b:b \in \overline{\mathcal{B}}} k_b = N - \sum_{b:b \notin \overline{\mathcal{B}}} k_b = N - 0 = N$ as desired. \\

Case (iii): Lemma \ref{lemma:optkb} yields that $b \in \overline{\mathcal{B}} \implies \overline{P}_b = \overline{P}_{(1)} > \overline{P}_{(2)} = P \implies  k_b = N$. Then, via the second part of Definition \eqref{def:equilibrium}, $\sum_{b:b \notin \overline{\mathcal{B}}} k_b = N - \sum_{b:b \in \overline{\mathcal{B}}} k_b \leq N - N = 0$. Then, since $k_b \geq 0$ for all $b \in \mathcal{B}$, the previous result $\sum_{b:b \notin \overline{\mathcal{B}}} k_b \leq 0$ therefore implies $\sum\limits_{b:b \notin \overline{\mathcal{B}}} k_b = 0$. Finally, the second part of Definition \eqref{def:equilibrium} yields $\sum_{b:b \in \overline{\mathcal{B}}} k_b = N - \sum_{b:b \notin \overline{\mathcal{B}}} k_b = N - 0 = N$ as desired.
\end{proof}

\section{Proofs of Results from Section \ref{sec:results}}
\label{app:proofs}
\renewcommand{\thesubsection}{\thesection.\arabic{subsection}}
\setcounter{subsection}{0}

\renewcommand{\theequation}{\thesection.\arabic{equation}}
\setcounter{equation}{0}

\subsection{Proof of Proposition \ref{prop:eqsoln}}
This result follows directly from Lemmas \ref{lemma:necconditions} and \ref{lemma:suffconditions}.

\subsection{Proof of Proposition \ref{prop:main1}}
Applying $\Pi_b(x) = x$, $r_b = 0$ and $R_{b,t} = R_{\overline{B}}$ for all $b \in \mathcal{B}$ to Equation \eqref{eqn:maxPb} yields:

\begin{equation}
\forall b \in \mathcal{B}: \overline{P}_b = \max\{ P : \frac{1}{N} \Big( \mathbb{E}[R_{\overline{B}}] - P \Big) \geq 0\}
\label{eqn:maxPbriskneutral}
\end{equation}
which further implies:
\begin{equation}
    \forall b \in \mathcal{B}: \overline{P}_b = \mathbb{E}[R_{\overline{B}}] 
\end{equation}
Applying this result to Proposition \ref{prop:eqsoln} implies $P = \mathbb{E}[R_{\overline{B}}]$ which implies the first part of the result, $\chi = 1$. Finally, as per Lemma \eqref{lemma:optkb}, any $k_b \in \{1,...,N\}$ is optimal for Builder $b$ and thus, given the second requirement in Equilibrium Definition \eqref{def:equilibrium}, we impose the symmetric solution $k_b = \frac{N}{B}$ for all $b \in \mathcal{B}$. 

\subsection{Proof of Proposition \ref{prop:main2}}
Jensen's inequality implies:
\begin{equation}
    \mathbb{E}[\Pi_b(R_{\overline{B}} - P)] < \Pi_b(\mathbb{E}[R_{\overline{B}}] - P)
\end{equation}
Thus, if $P \geq \mathbb{E}[R_{\overline{B}}]$, then $\Pi^\prime > 0$ and $\Pi(0) = 0$ imply $\mathbb{E}[\Pi_b(R_{\overline{B}} - P)] < \Pi_b(\mathbb{E}[R_{\overline{B}}] - P) < 0$ and further $\frac{1}{N}\mathbb{E}[\Pi_b(R_{\overline{B}} - P)] - r_b\cdot P < 0$. In turn, $\max\{ P : \frac{1}{N} \mathbb{E}[\Pi_b(R_{b,t}- P)] - r_b \cdot P \geq 0\} \subseteq [0, \mathbb{E}[R_{\overline{B}}])$ and thus $\{ P : \frac{1}{N} \mathbb{E}[\Pi_b(R_{b,t}- P)] - r_b \cdot P \geq 0\}$ being closed implies $\overline{P}_{(b)} < \mathbb{E}[R_{\overline{B}}]$ which implies the first part of the result, $\chi < 1$.\footnote{$\{ P : \frac{1}{N} \mathbb{E}[\Pi_b(R_{b,t} - P)] - r_b \cdot P \geq 0\}$ being a closed set follows from $\Pi_b^{\prime\prime} < 0$ and $\mathbb{E}[R_{b,t}] < \infty$.}

The second part of the result follows from symmetry implying $\overline{P}_b$ being equal for all Builders $b \in \mathcal{B}$. In turn, Equilibrium Definition \eqref{def:equilibrium} implies $P = \overline{P}_b$ for all Builders $b \in \mathcal{B}$. Then, as per Lemma \eqref{lemma:optkb}, any $k_b \in \{1,...,N\}$ is optimal for Builder $b$ and thus, given the second requirement in Equilibrium Definition \eqref{def:equilibrium}, we impose the symmetric solution $k_b = \frac{N}{B}$ for all $b \in \mathcal{B}$.  

\subsection{Proof of Proposition \ref{prop:main3}}

Applying $\Pi_b(x) = x$ and $r_b = 0$ to Equation \eqref{eqn:maxPb} yields:

\begin{equation}
\overline{P}_b = \max\{ P : \frac{1}{N} \Big( \mathbb{E}[R_{b,t}] - P \Big) \geq 0\}
\label{eqn:maxPbheteroegeneous}
\end{equation}
which further implies:
\begin{equation}
    \overline{P}_b = \mathbb{E}[R_{b,t}]
\end{equation}
Moreover:
\begin{equation}
    \overline{P}_{(1)} = \mathbb{E}[R_{1,t}],\qquad \overline{P}_{(2)} = \mathbb{E}[R_{2,t}]
\end{equation}
In turn, direct verification from Definition \eqref{def:equilibrium} reveals that the following is an equilibrium:
\begin{equation}
    P = \overline{P}_{(2)} = \mathbb{E}[R_{2,t}],\qquad k_b = N \cdot \mathcal{I}(b = 1)
\end{equation}
which establishes the second part of the result. For the first part of the result, note that $k_b = N \cdot \mathcal{I}(b = 1)$ implies $R_{\overline{B}} = R_{1,t}$. In turn, the first part of the result holds as follows:
\begin{equation}
    \chi = \frac{P}{\mathbb{E}[R_{\overline{B}}]} = \frac{\mathbb{E}[R_{2,t}]}{\mathbb{E}[R_{1,t}]} < 1
\end{equation}

\subsection{Proof of Proposition \ref{prop:main4}}
By direct verification from Definition \eqref{def:equilibrium}, the following is an equilibrium:

\begin{equation}
    P = \frac{\mathbb{E}[R_{1,t}]}{1 + r_{(2)}\cdot N},\qquad k_b = \begin{cases}
    \frac{N}{|\overline{\overline{\mathcal{B}}}|}
& \text{ if } r_b = r_{(1)}\\
0 & \text{if } r_b > r_{(1)}\end{cases} 
\end{equation}
which establishes the last part of the result. For the first part, note that $\forall b, b^\prime R_{b,t} = R_{b^\prime,t}$ which implies that $forall b: \mathbb{E}[R_{b,t}] = \mathbb{E}[R_{\overline{B}}]$. Then, applying $P = \frac{\mathbb{E}[R_{\overline{B}}]}{1 + r_{(2)}\cdot N}$ to Equation \eqref{eqn:mevinternalizationratio} yields the desired result. 

\subsection{Proof of Proposition \ref{prop:main8}}

By direct verification from Definition \eqref{def:equilibrium}, the following is an equilibrium:

\begin{equation}
    P = \overline{P}_{(2)},\qquad k_b = \begin{cases}
    \frac{N}{|\overline{\mathcal{B}}|}
& \text{ if } b \in \overline{\mathcal{B}}\\
0 & \text{otherwise}\end{cases} 
\end{equation}

and thus an equilibrium exists. The result $\sum\limits_{b \in \overline{B}} k_b = N$ then follows as a corollary of Lemma \ref{lemma:necconditions2}.

\subsection{Proof of Proposition \ref{prop:main9}}

By direct verification from Definition \eqref{def:equilibrium}, the following is an equilibrium:

\begin{equation}
    P = \overline{P}_{(2)},\qquad k_b = \begin{cases}
    \frac{N}{|\overline{\mathcal{B}}|}
& \text{ if } b \in \overline{\mathcal{B}}\\
0 & \text{otherwise}\end{cases} 
\end{equation}

and thus an equilibrium exists. Then, $\forall i \in \mathcal{I}: r_i  < \frac{1}{N} \min_{b \in \mathcal{B}/\mathcal{I}}~ \Big( (1 + r_b \cdot N) \times \frac{\mathbb{E}[R_{\mathcal{I},t}]}{\mathbb{E}[R_{b,t}]} - 1\Big)$ implies $\mathcal{B} \subseteq \mathcal{I}$ so that the result $\sum\limits_{b \in \overline{I}} k_b = N$ follows as a corollary of Lemma \ref{lemma:necconditions2}.

\bibliographystyle{splncs04}
\bibliography{references}

\begin{thebibliography}{10}
\providecommand{\url}[1]{\texttt{#1}}
\providecommand{\urlprefix}{URL }
\providecommand{\doi}[1]{https://doi.org/#1}

\bibitem{angeris2023specter}
Angeris, G., Chitra, T., Diamonds, T., Kulkarni, K.: The specter (and spectra) of miner extractable value. arXiv preprint arXiv:2310.07865  (2023)

\bibitem{centralization-pb}
Bahrani, M., Garimidi, P., Roughgarden, T.: Centralization in block building and proposer-builder separation (2024), \url{https://arxiv.org/abs/2401.12120}

\bibitem{futuremev}
Burian, J.: The future of mev  (2024), \url{https://arxiv.org/abs/2404.04262}

\bibitem{burian_crapis_tickets}
Burian, J., Crapis, D.: Economic analysis of execution tickets. \url{https://ethresear.ch/t/economic-analysis-of-execution-tickets/18894} (2024), accessed: 2024-06-26

\bibitem{burian_crapis_auctions}
Burian, J., Crapis, D.: Execution auctions as an alternative to execution tickets. \url{https://ethresear.ch/t/execution-auctions-as-an-alternative-to-execution-tickets/19894} (2024), accessed: 2024-06-26

\bibitem{daian2020flash}
Daian, P., Goldfeder, S., Kell, T., Li, Y., Zhao, X., Bentov, I., Breidenbach, L., Juels, A.: Flash boys 2.0: Frontrunning in decentralized exchanges, miner extractable value, and consensus instability. In: 2020 IEEE symposium on security and privacy (SP). pp. 910--927. IEEE (2020)

\bibitem{drake2023execution}
Drake, J., Neuder, M.: Execution tickets. \url{https://ethresear.ch/t/execution-tickets/17944} (2023), accessed: 2024-07-24

\bibitem{mev_boost}
Flashbots: Mev-boost. \url{https://github.com/flashbots/mev-boost}, accessed: 2024-07-18

\bibitem{harvey2024evolution}
Harvey, C.R., Hasbrouck, J., Saleh, F.: The evolution of decentralized exchange: Risks, benefits, and oversight. Working Paper  (2024)

\bibitem{pbs-paper}
Heimbach, L., Kiffer, L., Torres, C.F., Wattenhofer, R.: Ethereum's proposer-builder separation: Promises and realities  (2023), \url{https://arxiv.org/abs/2305.19037}

\bibitem{john2024economics}
John, K., Monnot, B., Mueller, P., Saleh, F., Schwarz-Schilling, C.: Economics of ethereum. Available at SSRN 4783695  (2024)

\bibitem{LVR}
Milionis, J., Moallemi, C.C., Roughgarden, T., Zhang, A.L.: Automated market making and loss-versus-rebalancing. arXiv preprint arXiv:2208.06046  (2022)

\bibitem{barnabe_pbs}
Monnot, B.: Notes on proposer-builder separation (pbs). \url{https://barnabe.substack.com/p/pbs} (2024), accessed: 2024-06-26

\bibitem{eip-7547}
Neuder, M., Buterin, V., D'Amato, F., Tsao, T., potuz, Darji, M.: Eip-7547: Beacon chain withdrawals. \url{https://eips.ethereum.org/EIPS/eip-7547} (2023), accessed: 2024-06-28

\bibitem{timinggames}
Schwarz-Schilling, C., Saleh, F., Thiery, T., Pan, J., Shah, N., Monnot, B.: Time is money: Strategic timing games in proof-of-stake protocols (2023), \url{https://arxiv.org/abs/2305.09032}

\bibitem{mev_pics}
Wahrstätter, T.: Mev-boost dashboard (2023), \url{https://mevboost.pics/}

\end{thebibliography}

\appendix

\end{document}